\newtheorem{pro}{Proposition}[section]
\newtheorem{corol}{Corollaire}[section]
\newtheorem{deff}{Définition}[section]
\newtheorem{thm}{Théorème}[section]
\newtheorem{lem}{Lemme}[subsection]
\newcommand{\trace}{\textbf{Tr}_{\mathbf{\mathbb{F}_{2^{n}}}/\mathbf{\mathbb{F}_{2^{n/2}}}}}
 \newcommand{\tr} {\textbf{Tr}_{\mathbb{F}_{2^{n}}}}
\title{\Large{New construction APN quadratic functions}}
\author{
      Zahid \bsc{Mounir}%
      \thanks{Université de Paris 8}}
  \date{\today}    
\begin{document}

\maketitle  
\newpage
\begin{abstract}
Le but de cet exposé est de détailler l'article de M\up{r} \bsc{Carlet}. Au passage je ferais un rappel sur quelques résultats intéressants en théorie des corps finis, puis je donnerais des preuves (nouvelles) de quelques résultats connus, ensuite je généraliserais la construction d'une famille de fonction APN. La référence du résultat
précédera ce dernier, en cas d'absence de référence, la preuve sera de l'auteur.
\end{abstract}

\section{Corps finies}

Certains résultats ne seront pas prouvés nous renvoyons le lecteur curieux à \cite{LDL}. Certains résultats ne requièrent pas la finitude du corps, nous renvoyons le lecteur à cette référence  \cite{Gal}, dans la suite $\mathbf{K}$ désigne un corps commutatif quelconque pas forcément fini.
\begin{pro} \cite{Gal}\label{double}
\'{E}tant donné un corps $\mathbf{K}$.$P\in \mathbf{K}[X]$. Le polynôme $P$ n'a pas de facteur carrés si et seulement si $\gcd(P,P')=1$

\end{pro}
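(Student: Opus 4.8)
The plan is to prove the two implications separately: one direction is purely formal and holds over an arbitrary commutative field, while the other needs a little care.

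I would start with the easy direction, in contrapositive form: if $P$ has a square factor, then $\gcd(P,P')$ is not constant. Indeed, write $P = A^{2} B$ with $A \in \mathbf{K}[X]$ and $\deg A \geq 1$; differentiating gives $P' = 2 A A' B + A^{2} B' = A\,(2 A' B + A B')$, so $A$ divides both $P$ and $P'$. Since $\deg A \geq 1$, the gcd is not a unit. Hence $\gcd(P,P') = 1 \Rightarrow P$ is squarefree.

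Conversely, suppose $D := \gcd(P,P')$ is non-constant and pick a monic irreducible factor $Q$ of $D$; then $Q \mid P$ and $Q \mid P'$. Writing $P = Q S$ we get $P' = Q' S + Q S'$, so $Q \mid Q' S$. Now $\deg Q' < \deg Q$, so $Q$ does not divide $Q'$ as soon as $Q' \neq 0$; in that case irreducibility of $Q$ gives $\gcd(Q,Q') = 1$, whence (Gauss's lemma) $Q \mid S$, so $Q^{2} \mid P$ and $P$ has a square factor. That is the desired contrapositive.

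The remaining case $Q' = 0$ is the main obstacle. In characteristic zero it cannot occur since $\deg Q \geq 1$, so we are done. In characteristic $p > 0$, $Q' = 0$ means $Q(X) = R(X^{p})$ for some $R \in \mathbf{K}[X]$; here I would use that the ground field is perfect --- which is the case for every finite field, since $x \mapsto x^{p}$ is bijective there: writing $R = \sum_{i} a_{i} X^{i}$ and choosing $b_{i}$ with $b_{i}^{p} = a_{i}$, one has $Q = \sum_{i} a_{i} X^{p i} = \bigl(\sum_{i} b_{i} X^{i}\bigr)^{p}$, contradicting the irreducibility of $Q$. So $Q' = 0$ is impossible and the proof closes. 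I would add a remark that over a non-perfect field the equivalence genuinely fails --- e.g.\ $P = X^{p} - t$ over $\mathbb{F}_{p}(t)$ is irreducible yet $P' = 0$ --- so the statement must be read with $\mathbf{K}$ perfect, a hypothesis always satisfied by the finite fields used in the sequel.
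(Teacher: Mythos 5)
Your proof is correct, and there is nothing in the paper to compare it against: the paper states this proposition without proof, referring the reader to \cite{Gal}. Both of your implications are sound --- the factorization $P=A^{2}B \Rightarrow A \mid \gcd(P,P')$ for the easy direction, and, for the converse, Euclid's lemma (the ``lemme de Gauss'' of the French tradition) applied to an irreducible common factor $Q$ of $P$ and $P'$ to promote $Q$ to $Q^{2}\mid P$. Your isolation of the case $Q'=0$ is exactly the right place to be careful, and your counterexample $P=X^{p}-t$ over $\mathbb{F}_{p}(t)$ (irreducible, hence squarefree, yet $P'=0$ so $\gcd(P,P')=P$) shows that the proposition as literally stated --- for an arbitrary field $\mathbf{K}$ --- fails in the ``only if'' direction; the hypothesis that $\mathbf{K}$ be perfect (or of characteristic $0$) is genuinely needed there and should be added to the statement. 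This defect is harmless for the rest of the paper: the two places where the proposition is invoked (Proposition \ref{Per} for $X^{d}-1$ and Lemma \ref{solut} for $X^{q+1}+1$) work over finite fields, which are perfect, and in fact only use the unconditional direction $\gcd(P,P')=1 \Rightarrow P$ squarefree, where moreover $P'$ is visibly nonzero and coprime to $P$.
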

\begin{pro} \cite{Gal} \label{gcd}
 $\mathbf{K}$ corps, $n \in \mathbb{N}_{>1}$, $ s\in \mathbb{N\up{*}}$, dans $\mathbf{K}[X]$ on a:
$$ \gcd(X^{s}-1,X^{n}-1)=X^{\gcd(s,n)}-1  $$
\end{pro}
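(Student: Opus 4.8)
Posons $d=\gcd(s,n)$. Le plan est d'établir l'égalité par double divisibilité : d'une part $X^{d}-1$ divise à la fois $X^{s}-1$ et $X^{n}-1$, donc divise leur pgcd ; d'autre part tout diviseur commun de $X^{s}-1$ et $X^{n}-1$ divise $X^{d}-1$. Comme $\mathbf{K}[X]$ est euclidien et que $\gcd(X^{s}-1,X^{n}-1)$ et $X^{d}-1$ sont unitaires, ces deux divisibilités forcent l'égalité.

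Pour la première, je partirais du fait élémentaire $Y-1\mid Y^{k}-1$ dans $\mathbf{K}[Y]$, via $Y^{k}-1=(Y-1)(Y^{k-1}+\cdots+1)$. Comme $d\mid s$, on écrit $s=d\ell$ et on spécialise $Y=X^{d}$ : alors $X^{s}-1=(X^{d})^{\ell}-1$ est divisible par $X^{d}-1$. De même $X^{d}-1\mid X^{n}-1$, d'où $X^{d}-1\mid\gcd(X^{s}-1,X^{n}-1)$.

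La seconde divisibilité est le point délicat : il faut éviter d'utiliser directement une relation de Bézout $as+bn=d$ au niveau des exposants, puisque l'un des entiers $a,b$ y est négatif. Je contournerais l'obstacle en passant au quotient. Soit $D$ un diviseur commun de $X^{s}-1$ et $X^{n}-1$, et $\bar X$ la classe de $X$ dans $A=\mathbf{K}[X]/(D)$. Alors $\bar X^{s}=\bar 1$ et $\bar X^{n}=\bar 1$ ; en particulier $\bar X\cdot\bar X^{s-1}=\bar 1$, donc $\bar X$ est inversible dans $A$. L'ordre de $\bar X$ dans le groupe $A^{\times}$ divise alors $s$ et $n$, donc divise $d=\gcd(s,n)$ (propriété classique de l'ordre d'un élément d'un groupe). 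D'où $\bar X^{d}=\bar 1$, c'est-à-dire $D\mid X^{d}-1$ ; appliqué à $D=\gcd(X^{s}-1,X^{n}-1)$, cela fournit l'inclusion manquante.

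Une variante sans quotient consiste à imiter l'algorithme d'Euclide sur les exposants : si $s=qn+r$ avec $0\le r<n$, l'identité $X^{s}-1=X^{r}(X^{qn}-1)+(X^{r}-1)$ jointe à $X^{n}-1\mid X^{qn}-1$ donne $\gcd(X^{s}-1,X^{n}-1)=\gcd(X^{r}-1,X^{n}-1)$, et l'on conclut par récurrence en suivant le calcul de $\gcd(s,n)$, le cas de base (un exposant divisant l'autre) relevant de la première divisibilité. Dans les deux approches, c'est cette seconde moitié qui concentre la difficulté, le reste n'étant que des identités polynomiales standard.
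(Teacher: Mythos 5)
Le papier ne d\'emontre pas cette proposition : elle est simplement cit\'ee d'apr\`es \cite{Gal}, il n'y a donc pas d'argument interne auquel comparer le v\^otre. Votre preuve est correcte et compl\`ete. Le sens facile ($d\mid s$ et $d\mid n$ impliquent $X^{d}-1\mid\gcd(X^{s}-1,X^{n}-1)$ via $Y-1\mid Y^{k}-1$ sp\'ecialis\'e en $Y=X^{d}$) est bien trait\'e, et vous identifiez correctement que toute la difficult\'e r\'eside dans l'autre sens. L'argument par passage au quotient $A=\mathbf{K}[X]/(D)$ est valide : $\bar X$ y est inversible d'inverse $\bar X^{s-1}$, son ordre dans $A^{\times}$ est fini et divise $s$ et $n$, donc $d$, d'o\`u $D\mid X^{d}-1$ ; le cas d\'eg\'en\'er\'e o\`u $D$ est une constante non nulle (quotient nul) ne pose pas de probl\`eme, la conclusion y \'etant triviale. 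La variante euclidienne sur les exposants, fond\'ee sur l'identit\'e $X^{s}-1=X^{r}(X^{qn}-1)+(X^{r}-1)$ et sur $X^{n}-1\mid X^{qn}-1$, est \'egalement correcte, reste purement polynomiale et fournit en prime un algorithme ; les deux approches \'evitent bien le pi\`ege de la relation de B\'ezout \`a coefficient n\'egatif que vous signalez.
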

\begin{corol}\cite{Gal}
 $\mathbf{K}$ corps , $n \in \mathbb{N}_{>1}$, $ s\in \mathbb{N\up{*}}$:
$$ X^{s}-1|X^{n}-1 \Leftrightarrow s|n \;\mathrm{dans}\;\mathbb{N\up{*}}$$
\end{corol}
\begin{pro} \cite{LDL}
 $\mathbb{F}$  un corps fini t.q $\#\mathbb{F}=q \; \mathrm{alors} \;\;\forall x \in \mathbb{F}\;$ \begin{equation}\label{corps}  x^{q}=x.\end{equation}
et inversement les solution de l'equation (\ref{corps}) sont exactement les éléments de $\mathbb{F}$ .
\end{pro}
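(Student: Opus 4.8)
Le plan est le suivant : pour l'implication directe, exploiter la structure multiplicative de $\mathbb{F}$ ; pour la r\'eciproque, un argument \'el\'ementaire de comptage de racines.

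D'abord le sens direct. Je poserais $\mathbb{F}^{*}=\mathbb{F}\setminus\{0\}$, qui est un groupe fini pour la multiplication, de cardinal $q-1$. Par le th\'eor\`eme de Lagrange, l'ordre de tout \'el\'ement $x\in\mathbb{F}^{*}$ divise $q-1$, donc $x^{q-1}=1$, et en multipliant par $x$ on obtient $x^{q}=x$. Comme $0^{q}=0$, l'\'egalit\'e (\ref{corps}) vaut pour tout $x\in\mathbb{F}$. Aucun r\'esultat ant\'erieur de l'expos\'e n'est n\'ecessaire ici.

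Ensuite la r\'eciproque. Je consid\'ererais le polyn\^ome $Q(X)=X^{q}-X$, non nul et de degr\'e $q$. Dans tout corps contenant $\mathbb{F}$, en particulier dans une cl\^oture alg\'ebrique, $Q$ poss\`ede au plus $q$ racines, puisqu'un polyn\^ome non nul de degr\'e $d$ sur un corps commutatif admet au plus $d$ racines --- fait obtenu en factorisant successivement $X-a$ pour chaque racine $a$. Or le sens direct a d\'ej\`a exhib\'e $q$ racines distinctes, \`a savoir les $q$ \'el\'ements de $\mathbb{F}$. Ces racines \'epuisent donc l'ensemble des solutions de (\ref{corps}), qui est par cons\'equent \'egal \`a $\mathbb{F}$.

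Le point le plus d\'elicat --- en fait le seul qui demande un mot de justification --- est le contr\^ole du nombre de racines : il faut fixer le sur-corps dans lequel on r\'esout l'\'equation et disposer de la borne reliant degr\'e et nombre de racines sur un corps commutatif. Une fois ce lemme acquis, tout se ram\`ene \`a la confrontation : $q$ racines d\'ej\`a trouv\'ees contre au plus $q$ possibles. On pourrait, si l'on souhaite souligner la s\'eparabilit\'e de $Q$, invoquer la Proposition \ref{double} : puisque $Q'(X)=-1$, on a $\gcd(Q,Q')=1$, donc $Q$ est sans facteur carr\'e ; mais c'est superflu, la distinction des racines r\'esultant d\'ej\`a de ce qu'elles forment $q$ \'el\'ements deux \`a deux distincts de $\mathbb{F}$.
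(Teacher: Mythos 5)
Votre preuve est correcte : le sens direct par le th\'eor\`eme de Lagrange dans $\mathbb{F}^{*}$ et la r\'eciproque par comptage des racines de $X^{q}-X$ constituent exactement l'argument standard. Le texte ne donne d'ailleurs aucune d\'emonstration de cette proposition (il renvoie \`a \cite{LDL}, o\`u c'est pr\'ecis\'ement cette preuve qui figure) ; votre r\'edaction, y compris la pr\'ecision sur le sur-corps dans lequel on compte les solutions, comble donc proprement cette omission.
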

\begin{corol} \cite{LDL} \label{inter}
 $p$ un nombre premier  $s,n \in \mathbb{N\up{*}}$.  alors :
$$ \mathbb{F}_{p^{s}}\bigcap \mathbb{F}_{p^{n}}=\mathbb{F}_{p^{\gcd(s,n)}}$$
\end{corol}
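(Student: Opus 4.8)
\medskip
\noindent\textbf{Esquisse de d\'emonstration (plan).}
Le plan est de se placer dans une cl\^oture alg\'ebrique fix\'ee $\overline{\mathbb{F}_{p}}$ de $\mathbb{F}_{p}$ (ou dans un surcorps commun quelconque, par exemple $\mathbb{F}_{p^{\mathrm{ppcm}(s,n)}}$), dans lequel, en vertu de \eqref{corps} et de sa r\'eciproque, $\mathbb{F}_{p^{s}}$ est exactement l'ensemble des racines du polyn\^ome $X^{p^{s}}-X$, et de m\^eme $\mathbb{F}_{p^{n}}$ est l'ensemble des racines de $X^{p^{n}}-X$. L'intersection $\mathbb{F}_{p^{s}}\cap\mathbb{F}_{p^{n}}$ est donc l'ensemble des racines communes de ces deux polyn\^omes, c'est-\`a-dire l'ensemble des racines de leur pgcd $D:=\gcd\bigl(X^{p^{s}}-X,\;X^{p^{n}}-X\bigr)$ : ce pgcd s'obtient par l'algorithme d'Euclide, il est donc le m\^eme qu'on le calcule dans $\mathbb{F}_{p}[X]$ ou dans $\overline{\mathbb{F}_{p}}[X]$, et un \'el\'ement $\alpha$ est racine commune si et seulement si $(X-\alpha)\mid D$.

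Il reste alors \`a calculer $D$ explicitement. En \'ecrivant $X^{p^{s}}-X=X\,(X^{p^{s}-1}-1)$ et $X^{p^{n}}-X=X\,(X^{p^{n}-1}-1)$, et en remarquant que $X$ est premier avec chacun des facteurs $X^{p^{s}-1}-1$ et $X^{p^{n}-1}-1$ (de terme constant $-1$), on obtiendrait par la proposition~\ref{gcd}
\[
D \;=\; X\bigl(X^{\delta}-1\bigr), \qquad \delta:=\gcd(p^{s}-1,\;p^{n}-1).
\]

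Le point suivant, qui est la seule partie vraiment calculatoire, est l'identit\'e arithm\'etique $\gcd(p^{s}-1,\;p^{n}-1)=p^{\gcd(s,n)}-1$, analogue dans $\mathbb{Z}$ de la proposition~\ref{gcd} : je la prouverais par l'algorithme d'Euclide, en observant que pour $s\geq n$ la division $p^{s}-1=p^{\,s-n}(p^{n}-1)+(p^{\,s-n}-1)$ donne $\gcd(p^{s}-1,p^{n}-1)=\gcd(p^{\,s-n}-1,p^{n}-1)$, ce qui reproduit sur les exposants l'algorithme d'Euclide (soustractif) appliqu\'e au couple $(s,n)$, d'o\`u le r\'esultat par r\'ecurrence (le cas o\`u $n$ divise $s$ servant de base, puisque $\gcd(p^{\gcd(s,n)}-1,0)=p^{\gcd(s,n)}-1$). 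En posant $d=\gcd(s,n)$ on obtiendrait donc $D=X\,(X^{p^{d}-1}-1)=X^{p^{d}}-X$, dont l'ensemble des racines est pr\'ecis\'ement $\mathbb{F}_{p^{d}}$ d'apr\`es \eqref{corps} ; d'o\`u $\mathbb{F}_{p^{s}}\cap\mathbb{F}_{p^{n}}=\mathbb{F}_{p^{\gcd(s,n)}}$.

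Aucune \'etape n'est conceptuellement profonde ; le soin principal porte sur la justification (esquiss\'ee ci-dessus) que l'intersection des ensembles de racines est exactement l'ensemble des racines du pgcd — on peut au besoin noter que $X^{p^{s}}-X$ et $X^{p^{n}}-X$ sont s\'eparables, leur d\'eriv\'ee valant $-1$ — et sur la preuve de l'identit\'e $\gcd(p^{s}-1,p^{n}-1)=p^{\gcd(s,n)}-1$. Une variante \'eviterait le calcul du pgcd de polyn\^omes en traitant s\'epar\'ement les deux inclusions : pour $\mathbb{F}_{p^{\gcd(s,n)}}\subseteq\mathbb{F}_{p^{s}}\cap\mathbb{F}_{p^{n}}$, de $d\mid s$ on tirerait $p^{d}-1\mid p^{s}-1$, donc (par le corollaire sur la divisibilit\'e des $X^{a}-1$) $X^{p^{d}}-X\mid X^{p^{s}}-X$, donc $\mathbb{F}_{p^{d}}\subseteq\mathbb{F}_{p^{s}}$, et sym\'etriquement pour $\mathbb{F}_{p^{n}}$ ; pour l'inclusion inverse, tout $x\neq 0$ de l'intersection v\'erifie $x^{p^{s}-1}=x^{p^{n}-1}=1$, si bien que son ordre divise $\gcd(p^{s}-1,p^{n}-1)=p^{d}-1$, d'o\`u $x^{p^{d}}=x$ et $x\in\mathbb{F}_{p^{d}}$.
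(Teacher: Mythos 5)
Your proof is correct. The paper itself gives no proof of this corollary (it is simply cited to [LDL]), so there is no in-paper argument to compare against; but your argument is sound and uses exactly the toolkit the paper has already set up: Proposition~\ref{gcd} for $\gcd(X^{a}-1,X^{b}-1)$, the characterization \eqref{corps} of $\mathbb{F}_{q}$ as the root set of $X^{q}-X$, and separability (Proposition~\ref{double}) to ensure that the common roots are exactly the roots of the polynomial gcd. The two delicate points — that the gcd is unchanged when computed over an extension, and the arithmetic identity $\gcd(p^{s}-1,p^{n}-1)=p^{\gcd(s,n)}-1$ proved by running Euclid on the exponents — are both handled. The ``variante'' you sketch at the end (two inclusions: divisibility of $X^{p^{d}}-X$ into $X^{p^{s}}-X$ for one direction, the order of $x$ in $\overline{\mathbb{F}_{p}}^{*}$ for the other) is in fact the cleaner of the two routes, since it avoids the polynomial-gcd machinery entirely; either version is acceptable.
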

\begin{corol}\label{coro1}
 $\mathbb{F}_{2^{n}}$ un corps fini $ \forall \beta \in \mathbb{F}_{2^{n}}$  $\forall a \in \mathbb{N}$:
 $${\beta}^{2^{a}}={\beta}^{2^{a\bmod n}}$$
 \begin{proof}
 Par division euclidienne de $a$ par $n$, $\exists(q,r) \in \mathbb{N}^{2}$ tel que : $a=qn+r \quad \mathrm{avec}\; 0\leq r<n.$
 Donc $n|nq \Rightarrow \mathbb{F}_{2^{n}}\subset \mathbb{F}_{2^{nq}}\Rightarrow \forall \beta \in \mathbb{F}_{2^{n}},\;{\beta}^{2^{qn}}=\beta$\\
 on conclut avec : ${\beta}^{2^{a}}={\beta}^{2^{qn+r}}=({\beta}^{2^{qn}})^{2^{r}}={\beta}^{2^{r}}={\beta}^{2^{a \bmod n}}$
 
 \end{proof}
\end{corol}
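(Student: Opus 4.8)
The plan is to reduce the statement to the single identity $\beta^{2^{n}} = \beta$, which is nothing but equation (\ref{corps}) applied to the field $\mathbb{F}_{2^{n}}$, and then to ``unwind'' the exponent $2^{a}$ by Euclidean division. First I would write $a = qn + r$ with $q, r \in \mathbb{N}$ and $0 \le r < n$, so that by definition $r = a \bmod n$. Since $2^{a} = 2^{qn}\cdot 2^{r} = (2^{n})^{q}\cdot 2^{r}$, this gives $\beta^{2^{a}} = \big(\beta^{2^{qn}}\big)^{2^{r}}$, and the whole statement reduces to proving $\beta^{2^{qn}} = \beta$.

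For that key step I would argue in one of two equivalent ways. The quickest is to iterate: by (\ref{corps}) applied with field size $2^{n}$, the map $x \mapsto x^{2^{n}}$ fixes every element of $\mathbb{F}_{2^{n}}$, so applying it $q$ times to $\beta$ gives $\beta^{(2^{n})^{q}} = \beta^{2^{qn}} = \beta$. Alternatively, since $n \mid qn$ one has $\mathbb{F}_{2^{n}} \subseteq \mathbb{F}_{2^{qn}}$ --- this is Corollaire~\ref{inter} with $s = n$, $\gcd(n, qn) = n$ --- and then (\ref{corps}) for the field $\mathbb{F}_{2^{qn}}$ yields $\beta^{2^{qn}} = \beta$ directly; this second phrasing is the one closest to the facts already recalled above.

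Combining the two, $\beta^{2^{a}} = \big(\beta^{2^{qn}}\big)^{2^{r}} = \beta^{2^{r}} = \beta^{2^{a \bmod n}}$, which is the claim. There is no genuinely hard step here; the only point needing a little care is the degenerate case $q = 0$ (that is, $a < n$), where the containment argument is vacuous but the conclusion still holds trivially because then $\beta^{2^{qn}} = \beta^{2^{0}} = \beta^{1} = \beta$, so the displayed computation goes through unchanged. Accordingly, I expect the only ``obstacle'' to be bookkeeping: stating the division with the correct range $0 \le r < n$ so that $r$ is really $a \bmod n$, and invoking the iteration of $x \mapsto x^{2^{n}}$ a well-defined finite number of times.
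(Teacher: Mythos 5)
Your proof is correct and follows essentially the same route as the paper: Euclidean division $a = qn + r$, reduction to $\beta^{2^{qn}} = \beta$ via the containment $\mathbb{F}_{2^{n}} \subseteq \mathbb{F}_{2^{qn}}$ (or equivalently by iterating $x \mapsto x^{2^{n}}$), then the computation $\beta^{2^{a}} = (\beta^{2^{qn}})^{2^{r}} = \beta^{2^{r}}$. Your explicit treatment of the degenerate case $q = 0$ is a small but welcome refinement over the paper's argument, which implicitly assumes $\mathbb{F}_{2^{nq}}$ makes sense.
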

\begin{corol}
Soit $n$ un entier pair non nul.
\renewcommand{\labelenumi}{{\normalfont (\roman{enumi})}}
 \begin{enumerate}
 \item $\; \forall x \in \mathbf{F}_{2^{n}},\; x^{2^{\frac{n}{2}}+1} \in \mathbf{F}_{2^{n/2}}$
 \item Si $n/2$ impair, alors $\mathbb{F}_{2^{n/2}}\cap \mathbb{F}_{2^{2}}=\mathbb{F}_{2}$.
  
 \end{enumerate}
 \end{corol}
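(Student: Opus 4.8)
\noindent\emph{Plan de preuve propos\'e.}

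Pour le point (i), le plan est de montrer directement que l'\'el\'ement $y := x^{2^{n/2}+1}$ est invariant par l'automorphisme de Frobenius $z \mapsto z^{2^{n/2}}$ de $\mathbb{F}_{2^{n}}$, puis d'invoquer la caract\'erisation des \'el\'ements d'un corps fini \`a $q = 2^{n/2}$ \'el\'ements comme solutions de $z^{q} = z$ (la proposition contenant l'\'equation~(\ref{corps}), appliqu\'ee avec $q = 2^{n/2}$ ; c'est licite puisque $\tfrac{n}{2}\mid n$, donc $\mathbb{F}_{2^{n/2}}\subset\mathbb{F}_{2^{n}}$). Le cas $x=0$ \'etant trivial, on suppose $x \neq 0$ ; en calculant l'exposant $2^{n/2}\bigl(2^{n/2}+1\bigr) = 2^{n} + 2^{n/2}$ on obtient
$$ y^{2^{n/2}} \;=\; x^{\,2^{n}+2^{n/2}} \;=\; x^{2^{n}}\cdot x^{2^{n/2}} \;=\; x\cdot x^{2^{n/2}} \;=\; x^{2^{n/2}+1} \;=\; y, $$
o\`u l'on a utilis\'e $x^{2^{n}} = x$ (\'equation~(\ref{corps}) pour $\mathbb{F}_{2^{n}}$). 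Ainsi $y$ est racine de $z^{2^{n/2}} = z$, d'o\`u $y \in \mathbb{F}_{2^{n/2}}$. On remarquera que $x^{2^{n/2}+1}$ n'est autre que la norme relative de $x$ de $\mathbb{F}_{2^{n}}$ vers $\mathbb{F}_{2^{n/2}}$, ce qui rend le r\'esultat naturel.

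Pour le point (ii), j'appliquerais simplement le Corollaire~\ref{inter} avec $p = 2$, $s = \tfrac{n}{2}$ et l'autre exposant \'egal \`a $2$ : il fournit $\mathbb{F}_{2^{n/2}} \cap \mathbb{F}_{2^{2}} = \mathbb{F}_{2^{\gcd(n/2,\,2)}}$. Comme $\tfrac{n}{2}$ est impair par hypoth\`ese, $\gcd\bigl(\tfrac{n}{2},2\bigr) = 1$, donc $\mathbb{F}_{2^{n/2}} \cap \mathbb{F}_{2^{2}} = \mathbb{F}_{2^{1}} = \mathbb{F}_{2}$.

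Il n'y a pas ici d'obstacle r\'eel : le seul point demandant un (petit) soin est le calcul d'exposant dans (i) et l'identification correcte de $2^{n/2}$ comme cardinal du sous-corps vis\'e ; pour (ii), tout le contenu est d\'ej\`a dans le Corollaire~\ref{inter}, il ne reste qu'\`a observer une parit\'e. Si l'on pr\'ef\`ere \'eviter de citer ce corollaire pour (ii), on peut le red\'emontrer \`a la main : un \'el\'ement non nul $z$ de $\mathbb{F}_{2^{n/2}} \cap \mathbb{F}_{2^{2}}$ v\'erifie $z^{2^{n/2}-1} = 1$ et $z^{3} = 1$, donc $z^{\gcd(2^{n/2}-1,\,3)} = 1$ ; or $\gcd(2^{n/2}-1,\,3) = \gcd(2^{n/2}-1,\,2^{2}-1) = 2^{\gcd(n/2,\,2)}-1 = 1$ (analogue entier de la Proposition~\ref{gcd}), d'o\`u $z = 1$, et l'intersection vaut $\{0,1\} = \mathbb{F}_{2}$.
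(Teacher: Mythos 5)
Votre preuve est correcte et suit essentiellement la m\^eme d\'emarche que celle du papier : le calcul $\bigl(x^{2^{n/2}+1}\bigr)^{2^{n/2}} = x^{2^{n}+2^{n/2}} = x^{2^{n/2}+1}$ pour le point (i) et l'application du Corollaire~\ref{inter} pour le point (ii) — notez d'ailleurs que dans le papier ces deux arguments apparaissent sous les num\'eros invers\'es. Votre r\'edaction est m\^eme un peu plus compl\`ete, puisque vous explicitez le passage de l'invariance par Frobenius \`a l'appartenance \`a $\mathbb{F}_{2^{n/2}}$ via l'\'equation~(\ref{corps}), que le papier laisse implicite.
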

 \begin{proof} 
 \indent  
 \renewcommand{\labelenumi}{{\normalfont (\roman{enumi})}}
 \begin{enumerate}
 \item Il suffit d'appliquer le corollaire \ref{inter}.
 \item En effet ${(x^{2^{\frac{n}{2}}+1})}^{2^{\frac{n}{2}}}=x^{2^{n}+2^{\frac{n}{2}}}=x^{2^{\frac{n}{2}}+1}$
 \end{enumerate}
  \end{proof}
  
  \begin{pro}\cite{Gal}
$p$ premier, $n\in \mathbb{N\up{*}}$ et $q=p^{n}$. Les $\mathbb{F}_p$-sous espaces vectorielles de $\mathbb{F}_{q} $ sont au nombre de :
$$\sum_{s=0}^{n} \frac{(p^{n}-1)(p^{n-1}-1)\ldots (p^{n-s+1}-1)}{(p^{s}-1)(p^{s-1}-1)\ldots (p-1)}$$
\end{pro}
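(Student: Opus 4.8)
Le plan est de voir $\mathbb{F}_{q}$ comme un $\mathbb{F}_{p}$-espace vectoriel $V$ de dimension $n$, puis de partitionner l'ensemble de ses sous-espaces suivant leur dimension~: le nombre total cherché vaut $\sum_{s=0}^{n} N_{s}$, où $N_{s}$ désigne le nombre de sous-espaces de dimension $s$. Il suffit donc d'établir que
$$N_{s}=\frac{(p^{n}-1)(p^{n-1}-1)\ldots (p^{n-s+1}-1)}{(p^{s}-1)(p^{s-1}-1)\ldots (p-1)},$$
la somme sur $s$ donnant alors immédiatement l'énoncé (le terme $s=0$ correspondant au sous-espace nul, avec la convention que le produit vide vaut $1$).

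Pour cela, je compterais d'abord le nombre $L_{n,s}$ de $s$-uplets $(v_{1},\ldots,v_{s})$ de vecteurs de $V$ linéairement indépendants, en les construisant de proche en proche~: $v_{1}$ se choisit parmi les $p^{n}-1$ vecteurs non nuls, et une fois $v_{1},\ldots,v_{i}$ fixés, $v_{i+1}$ doit éviter le sous-espace qu'ils engendrent, lequel a $p^{i}$ éléments, d'où $p^{n}-p^{i}$ choix. Ainsi $L_{n,s}=\prod_{i=0}^{s-1}(p^{n}-p^{i})$. Le même décompte effectué à l'intérieur d'un sous-espace $W$ de dimension $s$ fixé montre que $W$ possède exactement $L_{s,s}=\prod_{i=0}^{s-1}(p^{s}-p^{i})$ bases ordonnées, nombre qui ne dépend pas de $W$.

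Comme tout $s$-uplet libre de $V$ engendre un unique sous-espace de dimension $s$, et comme chaque sous-espace de dimension $s$ est engendré par exactement $L_{s,s}$ de ces uplets, on obtient $N_{s}=L_{n,s}/L_{s,s}$. Il reste à simplifier~: en écrivant $p^{n}-p^{i}=p^{i}(p^{n-i}-1)$ et $p^{s}-p^{i}=p^{i}(p^{s-i}-1)$, les facteurs $p^{i}$ se compensent au numérateur et au dénominateur, et l'on reconnaît
$$\prod_{i=0}^{s-1}(p^{n-i}-1)=(p^{n}-1)\cdots(p^{n-s+1}-1),\qquad \prod_{i=0}^{s-1}(p^{s-i}-1)=(p^{s}-1)\cdots(p-1),$$
ce qui est la formule voulue pour $N_{s}$.

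L'étape la plus délicate — davantage par souci de rigueur que par réelle difficulté — est la justification de l'égalité $N_{s}=L_{n,s}/L_{s,s}$~: il faut vérifier proprement que l'application « $s$-uplet libre $\mapsto$ sous-espace engendré » est une surjection dont toutes les fibres ont le même cardinal $L_{s,s}$. Cela repose sur le fait que deux sous-espaces de même dimension sont isomorphes comme $\mathbb{F}_{p}$-espaces vectoriels, donc possèdent le même nombre de bases ordonnées~; le reste n'est que comptabilité élémentaire.
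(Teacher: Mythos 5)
Votre preuve est correcte et suit essentiellement la m\^eme d\'emarche que celle du texte~: d\'enombrement des syst\`emes libres \`a $s$ \'el\'ements par choix successifs, puis division par le nombre de bases ordonn\'ees d'un sous-espace de dimension $s$, et sommation sur $s$. Vous explicitez en outre la justification de l'\'egalit\'e $N_{s}=L_{n,s}/L_{s,s}$ (surjectivit\'e et \'equicardinalit\'e des fibres), point que le texte laisse implicite.
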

\begin{proof}
 Pour $s \in \{1,\ldots,n\}$ , dénombrons les $\mathbb{F}_p$-sous espaces vectorielles de dimension $s$ de $\mathbb{F}_{q} $ .
 \begin{itemize}
 \item[\Huge{\textbf{.}}] Le premier vecteur étant choisi non nul : $p^{n}-1$ possibilités.
 \item[\Huge{\textbf{.}}] Le second vecteur,non  colinéaire au premiers : $p^{n}-p$ possibilités.
 \item[\Huge{\textbf{.}}] Le troisième vecteur non lié aux deux premiers : $p^{n}-p^{2}$ possibilités.
 \item[\Huge{\textbf{.}}]$\ldots$
 \item[\Huge{\textbf{.}}] Le $s$-ième vecteur non lié aux précédents : $p^{n}-p^{s-1}$ possibilités.
 
 \end{itemize}
 Il y'a donc $(p^{n}-1)(p^{n}-p)\ldots (p^{n}-p^{s-1})$ systèmes libres à $s$ éléments. Le même raisonnement montre qu'un $\mathbb{F}_p$-sous espace vectorielle de dimension $s$ de $\mathbb{F}_{q} $ admet:
 $(p^{s}-1)(p^{s}-p)\ldots (p^{s}-p^{s-1})$ bases. Le nombre de $\mathbb{F}_{p}$-espace vectorielle de dimension de dim $s$ est donc:
 $$\frac{(p^{n}-1)(p^{n}-p)\ldots (p^{n}-p^{s-1})}{(p^{s}-1)(p^{s}-p)\ldots (p^{s}-p^{s-1}}= \frac{(p^{n}-1)(p^{n-1}-1)\ldots(p^{n-s+1}-1)}{(p^{s}-1)(p^{s-1}-1)\ldots(p-1)}$$

\end{proof}

\subsection{Critère d'irréductibilité}\label{irréd}
\begin{pro}\cite{Gal} \label{deg3}
Soit $P\in \mathbf{K}[X]\; \mathrm{tel\, que}\; D\up{°}(P)\leq 3$.\par
$P $ est irréductible sur $\mathbf{K}$ si et seulement si $P$ n'a pas de racine dans $ \mathbf{K}$
\end{pro}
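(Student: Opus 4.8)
The plan is to establish the two implications separately, the whole argument resting on one elementary remark about degrees: in any factorisation $P=QR$ into two non-constant polynomials with, say, $1\le \deg Q\le \deg R$ and $\deg P\le 3$, the smaller factor $Q$ must have degree exactly $1$, since otherwise $\deg P=\deg Q+\deg R\ge 4$. Thus reducibility of a polynomial of degree $2$ or $3$ amounts to the existence of a linear factor, which by the factor theorem amounts to the existence of a root in $\mathbf{K}$. Note that nothing here uses finiteness of $\mathbf{K}$.

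For the implication ``$P$ irreducible $\Rightarrow$ $P$ has no root'' I would argue by contraposition. If $a\in\mathbf{K}$ is a root of $P$, Euclidean division of $P$ by $X-a$ gives $P=(X-a)Q+r$ with $r\in\mathbf{K}$ constant; evaluating at $a$ forces $r=0$, hence $(X-a)\mid P$, say $P=(X-a)Q$. As soon as $\deg P\ge 2$ the cofactor $Q$ is non-constant, so $P$ is a product of two non-units and is therefore reducible.

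For the converse, ``$P$ has no root $\Rightarrow$ $P$ irreducible'', suppose for contradiction that $P$ is reducible, $P=QR$ with $Q,R$ both non-constant (both of degree $\ge 1$ since $P\ne 0$). Relabelling so that $\deg Q\le\deg R$, the degree remark above forces $\deg Q=1$, say $Q=\alpha X+\beta$ with $\alpha\ne 0$ in $\mathbf{K}$. Then $-\beta\alpha^{-1}\in\mathbf{K}$ is a root of $Q$, hence of $P=QR$, contradicting the hypothesis; so $P$ is irreducible.

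The only genuine subtlety here is not an obstacle but a matter of reading the hypothesis correctly: the equivalence really wants $\deg P\in\{2,3\}$. A non-zero constant ($\deg P=0$) is a unit, hence not irreducible, yet has no root; and a polynomial of degree $1$ is always irreducible but always has a root. I would therefore either state the hypothesis as $2\le\deg P\le 3$, or simply point out that the actual content of the proposition is the cases $\deg P=2$ and $\deg P=3$, the degree-count step being exactly what makes those two cases work and higher degrees fail.
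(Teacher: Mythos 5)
Your proof is correct, and in fact the paper offers no proof of its own to compare against: this proposition is one of the results the author imports from the Galois-theory reference without demonstration. Your argument is the standard one and is complete — the degree count $\deg P=\deg Q+\deg R$ forcing a linear factor, the factor theorem via Euclidean division by $X-a$, and the observation that a degree-one factor over a field always has a root. Your closing caveat is also well taken and is a genuine (minor) flaw in the statement as printed: for $\deg P=1$ the polynomial is irreducible yet has a root, and for $\deg P=0$ a nonzero constant is a unit (hence not irreducible) yet has no root, so the equivalence as literally stated with the hypothesis $D^{\circ}(P)\leq 3$ fails; the intended and correct hypothesis is $2\leq D^{\circ}(P)\leq 3$. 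This does not affect any use made of the proposition later in the paper, where it is only invoked for polynomials of degree $2^{i}+1\geq 3$.
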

\begin{pro}\cite{Gal}
Soit $P\in \mathbf{K[X]} \; \text{avec}\; D\up{°}(P)=n$.\par
$P$ est irréductible si et seulement si $P$ n'a pas de racines dans toutes extension $\mathbf{L}/\mathbf{K}$
tel que: $[\mathbf{L}:\mathbf{K}]\leq n/2$.
\end{pro}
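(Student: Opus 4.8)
The plan is to establish both implications by contraposition, reducing everything to the basic fact that if $\alpha$ is a root of $P$ in some extension of $\mathbf{K}$, then $[\mathbf{K}(\alpha):\mathbf{K}]$ equals the degree of the minimal polynomial of $\alpha$ over $\mathbf{K}$, and that this minimal polynomial divides $P$.

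First I would assume $P$ irreducible with $\deg P = n$ and take any extension $\mathbf{L}/\mathbf{K}$ containing a root $\alpha$ of $P$. The minimal polynomial $m_{\alpha}$ of $\alpha$ over $\mathbf{K}$ is a non-constant divisor of $P$, so irreducibility of $P$ forces $m_{\alpha}$ and $P$ to be associates, whence $[\mathbf{K}(\alpha):\mathbf{K}] = \deg m_{\alpha} = n$; since $\mathbf{K}(\alpha)\subseteq\mathbf{L}$, the tower law gives $[\mathbf{L}:\mathbf{K}]\geq n>n/2$. Contrapositively, $P$ has no root in any extension of degree $\leq n/2$. Conversely, assuming $P$ reducible, write $P = AB$ with $A,B\in\mathbf{K}[X]$ non-units; after swapping if necessary we may take $\deg A\leq\deg B$, hence $\deg A\leq n/2$. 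Picking an irreducible factor $Q$ of $A$ and setting $\mathbf{L}=\mathbf{K}[X]/(Q)$, we get a field extension with $[\mathbf{L}:\mathbf{K}]=\deg Q\leq\deg A\leq n/2$ in which the class of $X$ is a root of $Q$, hence of $P$. Combining the two directions yields the equivalence.

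The computations here are routine; the one point where care is needed is the reducible case. It is tempting but false to claim that every irreducible factor of $P$ has degree $\leq n/2$ — for instance $X(X^{2}+X+1)$ over $\mathbb{F}_2$ has an irreducible factor of degree $2>3/2$ — so one must genuinely use the decomposition $P=AB$ to exhibit a factor of degree at most $n/2$. It is also worth noting the degenerate low-degree cases: for $n\in\{2,3\}$ the hypothesis $[\mathbf{L}:\mathbf{K}]\leq n/2$ forces $\mathbf{L}=\mathbf{K}$, so this proposition indeed contains Proposition \ref{deg3} as a special case, and for $n=1$ it is vacuously true since no extension has degree $\leq 1/2$.
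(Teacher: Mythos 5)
Votre preuve est correcte et suit essentiellement la m\^eme d\'emarche que celle du texte : sens direct via $[\mathbf{K}(\alpha):\mathbf{K}]=n$ et la multiplicativit\'e des degr\'es, puis contraposition en \'ecrivant $P=AB$ avec $\deg A\leq n/2$ et en passant au corps de rupture $\mathbf{K}[X]/(Q)$ d'un facteur irr\'eductible de $A$. Les remarques suppl\'ementaires (n\'ecessit\'e de passer par la d\'ecomposition $P=AB$ plut\^ot que par un facteur irr\'eductible arbitraire, cas d\'eg\'en\'er\'es $n\leq 3$) sont justes mais ne changent pas l'argument.
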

\begin{proof}
\indent
\begin{proof}[Condition nécéssaire]
$P$ irréductible sur $ \mathbf{K}$. Soit $\alpha \in \mathbf{L}$, racine de $P$ alors $\mathbf{K(\alpha)}$ est un corps de rupture de $P$. \\
Donc $[\mathbf{K(\alpha)}:\mathbf{K}]=n \Rightarrow [\mathbf{L}:\mathbf{K}]\geq n > n/2$.

\renewcommand{\qedsymbol}{}
\end{proof}
\begin{proof}[Condition suffisante]
Par Contraposition. Si $P$ n'est pas irréductible, il existe $(Q,R)\in {\mathbf{K[X]}}^{2}$.
tel que : $P=QR \;\mathrm{et} \;1\leq D\up{°}(R), D\up{°}(Q)< n$, sans perte de généralité on peut supposer que : $D\up{°}(Q)\leq \frac{n}{2}$. 
Soit $f$ un facteur irréductible de $Q$, et $\mathbf{L}=\mathbf{K(\alpha)}$ un corps de rupture de $f$ alors $\alpha \in \mathbf{L}$ est une
racine de $P(X)$ et $[\mathbf{L}:\mathbf{K}]= D\up{°}(f)\leq \frac{n}{2}$.
\renewcommand{\qedsymbol}{}
\end{proof}
\end{proof}
\begin{pro} \cite{Gal} \label{irré}
Soit $P\in \mathbf{K[X]}$ irréductible, $D\up{°}(P)=n$ et $\mathbf{L}/\mathbf{K}$ extension de degré $m$ de  $\mathbf{K}$ avec $\gcd(m,n)=1$
alors $P$ est irréductible dans $\mathbf{L[X]}$.

\end{pro}
\begin{proof}
Supposons  $P$ est réductible dans $\mathbf{L[X]}$, soit $f$ un facteur irréductible de $P \;\mathrm{dans}\; \mathbf{L[X]}$
alors $0<D\up{°}(f)<n$. Soit $M=\mathbf{L(\alpha)}$ un corps de rupture de $f$.\\
$P$ étant irréductible dans $\mathbf{K[X]}$, donc $\mathbf{K(\alpha)}$ corps de rupture de $P$ sur $\mathbf{K}$. Alors $[\mathbf{K(\alpha)}:\mathbf{K}]=n$ , donc $[\mathbf{M}:\mathbf{K}]=[\mathbf{M}:\mathbf{K(\alpha)}][\mathbf{K(\alpha)}:\mathbf{K}]$ 
est divisible par $n$. Or $[\mathbf{M}:\mathbf{K}]=[\mathbf{M}:\mathbf{L}].[\mathbf{L}:\mathbf{K}]=D\up{°}(f)\times m$.
Comme  $\gcd(m,n)=1$, il vient $n$ divise $D\up{°}(f)$, contradiction. 
\end{proof}
\textbf{Remarque:} La proposition \ref{irré},  peut être déduite de la proposition qui va suivre, si nous avons éviter, c'est pour insister sur son caractère générique.
\begin{pro}\cite{LDL}
Soit $P \in \mathbb{F}_{q}[X]$ irréductible de degré $n$ et soit $k\in \mathbb{N}^{*}$, alors $P$ se factorise en $d$ polynômes irréductibles sur $\mathbb{F}_{q^{k}}[X]$ de
degré $n/d$ avec $d=\gcd(n,k)$ 
\end{pro}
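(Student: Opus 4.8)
Le plan est de passer par une racine de $P$ et de ramener le degr\'e des facteurs \`a un calcul de degr\'e d'extension. Je commencerais par fixer une racine $\alpha$ de $P$ dans une cl\^oture alg\'ebrique de $\mathbb{F}_q$ : comme $P$ est irr\'eductible de degr\'e $n$ sur $\mathbb{F}_q$, on a $[\mathbb{F}_q(\alpha):\mathbb{F}_q]=n$, donc $\mathbb{F}_q(\alpha)=\mathbb{F}_{q^n}$ par unicit\'e du corps \`a $q^n$ \'el\'ements ; $P$ est alors scind\'e sur $\mathbb{F}_{q^n}$ et, par le m\^eme argument, \emph{chacune} de ses racines $\beta$ v\'erifie $\mathbb{F}_q(\beta)=\mathbb{F}_{q^n}$. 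Je v\'erifierais ensuite que $P$ est sans facteur carr\'e : $P'\neq 0$ (sinon $P\in\mathbb{F}_q[X^p]$ serait une puissance $p$-i\`eme, le Frobenius \'etant surjectif sur $\mathbb{F}_q$, ce qui contredit l'irr\'eductibilit\'e), donc $\gcd(P,P')$ divise $P$ et est de degr\'e strictement inf\'erieur \`a $n$, ce qui force $\gcd(P,P')=1$, et la proposition \ref{double} conclut. Le pgcd \'etant inchang\'e par extension du corps de base, $P$ reste sans facteur carr\'e dans $\mathbb{F}_{q^k}[X]$, o\`u il s'\'ecrit donc $P=c\,f_1\cdots f_{d'}$ avec les $f_i$ unitaires, irr\'eductibles, deux \`a deux distincts.

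Ensuite je prendrais un facteur $f_i$ et une de ses racines $\beta$ : alors $\beta$ est racine de $P$ et $f_i$ est le polyn\^ome minimal de $\beta$ sur $\mathbb{F}_{q^k}$, d'o\`u $\deg f_i=[\mathbb{F}_{q^k}(\beta):\mathbb{F}_{q^k}]$. L'id\'ee-cl\'e est que, puisque $\mathbb{F}_q(\beta)=\mathbb{F}_{q^n}$, le corps $\mathbb{F}_{q^k}(\beta)$ n'est autre que le plus petit sous-corps de la cl\^oture alg\'ebrique contenant \`a la fois $\mathbb{F}_{q^k}$ et $\mathbb{F}_{q^n}$. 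Comme tout sous-corps fini les contenant est de la forme $\mathbb{F}_{q^c}$ avec $k\mid c$ et $n\mid c$ (on utilise $\mathbb{F}_{q^a}\subseteq\mathbb{F}_{q^b}\Leftrightarrow a\mid b$, cons\'equence du corollaire \ref{inter}), un tel $c$ est multiple de $\mathrm{ppcm}(k,n)$ ; et r\'eciproquement $\mathbb{F}_{q^{\mathrm{ppcm}(k,n)}}$ convient. Donc $\mathbb{F}_{q^k}(\beta)=\mathbb{F}_{q^{\mathrm{ppcm}(k,n)}}$ et
$$\deg f_i=\frac{\mathrm{ppcm}(k,n)}{k}=\frac{n}{\gcd(k,n)}=\frac{n}{d}.$$

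Pour conclure, tous les facteurs irr\'eductibles de $P$ dans $\mathbb{F}_{q^k}[X]$ ayant ainsi le m\^eme degr\'e $n/d$, l'\'egalit\'e des degr\'es dans $P=c\,f_1\cdots f_{d'}$ donne $d'\cdot (n/d)=n$, soit $d'=d$ : c'est l'\'enonc\'e. La partie que j'anticipe comme la plus d\'elicate est l'identification pr\'ecise du compositum, $\mathbb{F}_{q^k}(\beta)=\mathbb{F}_{q^{\mathrm{ppcm}(k,n)}}$, ainsi que l'observation que \emph{toute} racine de $P$ (et pas seulement une) engendre $\mathbb{F}_{q^n}$ sur $\mathbb{F}_q$, ce qui est exactement ce qui garantit que les $f_i$ ont \emph{tous} le m\^eme degr\'e ; une fois ces points \'etablis, il ne reste que de l'arithm\'etique \'el\'ementaire sur les ordres d'extension.
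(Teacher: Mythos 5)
Votre preuve est correcte ; notez que le papier ne d\'emontre pas cette proposition (il renvoie simplement \`a la r\'ef\'erence \cite{LDL}), de sorte qu'il n'y a pas d'argument interne auquel comparer le v\^otre. L'approche que vous suivez est la d\'emonstration standard : toutes les \'etapes tiennent, y compris les deux points que vous signalez comme d\'elicats --- chaque racine $\beta$ de $P$ a pour polyn\^ome minimal $P$ sur $\mathbb{F}_q$ (donc engendre $\mathbb{F}_{q^n}$), et l'identification $\mathbb{F}_{q^k}(\beta)=\mathbb{F}_{q^{\mathrm{ppcm}(k,n)}}$ d\'ecoule bien du treillis des sous-corps ($\mathbb{F}_{q^a}\subseteq\mathbb{F}_{q^b}\Leftrightarrow a\mid b$, cons\'equence du corollaire \ref{inter}) ; l'argument de s\'eparabilit\'e via $\gcd(P,P')=1$ et la proposition \ref{double}, ainsi que l'invariance du pgcd par extension du corps de base, sont \'egalement corrects.
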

\subsection{Trace sur un corps }
\begin{deff}
Soient $\mathbf{K}=\mathbb{F}_{q}$ et $\mathbf{F}=\mathbb{F}_{q^{m}}$.
Pour tout $\alpha \in \mathbf{F}$, la trace $\textbf{Tr}_{\mathbf{F}/\mathbf{K}}(\alpha )$ de $\alpha$ sur $\mathbf{K}$ est définie par:
$$\textbf{Tr}_{\mathbf{F}/\mathbf{K}}(\alpha )=\alpha +\alpha^{q}+\ldots+\alpha^{q^{m-1}}$$
Si $\mathbf{K}$ est un corps premier la trace est dite absolue et on la note seulement $\textbf{Tr}_{\mathbf{F}}.$
\end{deff}

La trace a des propriétés intéressantes que nous énoncerons sous forme d'un théorème:
\begin{thm}\cite{LDL}
Soient $\mathbf{F}=\mathbb{F}_{q^{m}}$ et $\mathbf{K}=\mathbb{F}_{q}$ alors:
 \renewcommand{\labelenumi}{{\normalfont (\roman{enumi})}}
 \begin{enumerate}
 \item $\textbf{Tr}_{\mathbf{F}/\mathbf{K}}$ est une forme $\mathbf{K}$-linéaire non nulle surjective.
 \item Pour tout $a\in \mathbf{K}\;,\;\textbf{Tr}_{\mathbf{F}/\mathbf{K}}(a)=ma$.
 \item Pour tout $\alpha \in \mathbf{F},\; \textbf{Tr}_{\mathbf{F}/\mathbf{K}}(\alpha^{q})=\textbf{Tr}_{\mathbf{F}/\mathbf{K}}(\alpha)$ (La trace est stable par le \textbf{\itshape{Frobenius})}.
 \end{enumerate}
\end{thm}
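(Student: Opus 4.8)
Le plan est de traiter les trois points, mais pas dans l'ordre de l'énoncé : je commencerais par établir que $\textbf{Tr}_{\mathbf{F}/\mathbf{K}}$ est bien à valeurs dans $\mathbf{K}$ et par le point (iii), qui ne coûtent qu'une réindexation de somme, puis (ii) et la $\mathbf{K}$-linéarité, et je garderais pour la fin la non-nullité, qui est le vrai point de l'énoncé.

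Pour (iii), j'écrirais $\textbf{Tr}_{\mathbf{F}/\mathbf{K}}(\alpha^{q}) = \alpha^{q} + \alpha^{q^{2}} + \ldots + \alpha^{q^{m}}$ et j'utiliserais l'égalité $\alpha^{q^{m}} = \alpha$ (l'équation (\ref{corps}) appliquée au corps $\mathbb{F}_{q^{m}}$, qui a $q^{m}$ éléments) pour faire glisser la somme d'un cran et retrouver $\textbf{Tr}_{\mathbf{F}/\mathbf{K}}(\alpha)$. Le même calcul appliqué à $\textbf{Tr}_{\mathbf{F}/\mathbf{K}}(\alpha)^{q}$ — en développant la puissance $q$-ième de la somme, ce qui est licite car $x \mapsto x^{q}$ est additive en caractéristique $p$ — montre que $\textbf{Tr}_{\mathbf{F}/\mathbf{K}}(\alpha)^{q} = \textbf{Tr}_{\mathbf{F}/\mathbf{K}}(\alpha)$ ; par la caractérisation des solutions de (\ref{corps}), on en déduit $\textbf{Tr}_{\mathbf{F}/\mathbf{K}}(\alpha) \in \mathbb{F}_{q} = \mathbf{K}$, donc l'application prend bien ses valeurs dans $\mathbf{K}$.

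Pour (ii), il suffit de remarquer que $a^{q} = a$ pour $a \in \mathbb{F}_{q}$, de sorte que les $m$ termes de la somme définissant $\textbf{Tr}_{\mathbf{F}/\mathbf{K}}(a)$ valent tous $a$, d'où $\textbf{Tr}_{\mathbf{F}/\mathbf{K}}(a) = ma$. Pour la $\mathbf{K}$-linéarité figurant dans (i), l'additivité vient de ce que chaque $x \mapsto x^{q^{i}}$ est additive (caractéristique $p$), et l'homogénéité de ce que $c^{q^{i}} = c$ pour $c \in \mathbf{K}$, donc $(c\alpha)^{q^{i}} = c\,\alpha^{q^{i}}$ ; en sommant sur $i$ on obtient $\textbf{Tr}_{\mathbf{F}/\mathbf{K}}(c\alpha) = c\,\textbf{Tr}_{\mathbf{F}/\mathbf{K}}(\alpha)$.

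Reste la non-nullité, qui est l'étape où il faut réellement travailler ; la surjectivité en résultera aussitôt, puisque l'image est un sous-$\mathbf{K}$-espace vectoriel de $\mathbf{K}$, lequel est de dimension $1$. Je raisonnerais par l'absurde : si $\textbf{Tr}_{\mathbf{F}/\mathbf{K}}$ était identiquement nulle sur $\mathbf{F}$, alors le polynôme $X + X^{q} + \ldots + X^{q^{m-1}} \in \mathbf{F}[X]$, qui est de degré $q^{m-1}$, admettrait pour racines les $q^{m}$ éléments de $\mathbf{F}$ ; or un polynôme non nul sur un corps a au plus autant de racines que son degré, et $q^{m-1} < q^{m}$ : contradiction. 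Donc $\textbf{Tr}_{\mathbf{F}/\mathbf{K}}$ n'est pas nulle, et étant $\mathbf{K}$-linéaire à valeurs dans $\mathbf{K}$, elle est surjective. La principale subtilité est donc bien l'argument de comptage de racines à la fin ; tout le reste n'est qu'une suite d'applications de $x^{q} = x$ et de l'additivité du Frobenius.
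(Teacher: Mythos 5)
Votre preuve est correcte et compl\`ete ; le texte, lui, ne d\'emontre pas ce th\'eor\`eme et renvoie simplement \`a \cite{LDL}. L'argument que vous d\'eveloppez --- glissement d'indices via $\alpha^{q^{m}}=\alpha$ pour le point (iii) et pour montrer que l'image est dans $\mathbf{K}$ (caract\'erisation de $\mathbb{F}_{q}$ par $x^{q}=x$), lin\'earit\'e par additivit\'e du Frobenius, puis non-nullit\'e par comptage des racines du polyn\^ome $X+X^{q}+\cdots+X^{q^{m-1}}$ de degr\'e $q^{m-1}<q^{m}$, la surjectivit\'e en d\'ecoulant du fait que l'image est un sous-$\mathbf{K}$-espace non nul de $\mathbf{K}$ --- est pr\'ecis\'ement la d\'emonstration classique de la r\'ef\'erence cit\'ee.
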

\begin{pro}[Transitivité de la trace] \cite{LDL}\label{transi}
      
 Soient $\mathbf{K}$ un corps fini, $\mathbf{F}$ une extension finie de $\mathbf{K}$ et $\mathbf{E}$ une extension finie de $\mathbf{F}$. Alors
 
 $$ \textbf{Tr}_{\mathbf{E}/\mathbf{K}}=\textbf{Tr}_{\mathbf{F}/\mathbf{K}}o\textbf{Tr}_{\mathbf{E}/\mathbf{F}}$$

\end{pro}
\begin{corol}\label{coro}
Soit $n$ un entier pair non nul. Alors:
\begin{enumerate}
\item
$ \forall x \in \mathbb{F}_{2^{n/2}},\; \textbf{Tr}_{\mathbb{F}_{2^{n}}}(x)=0.\,(\;\textit{i.e}\;\mathbb{F}_{2^{n/2}}\subset \mathrm{\mathcal{K}er}(\textbf{Tr}_{\mathbb{F}_{2^n}}))$
\item Il existe $\omega \in \mathbb{F}_{2^{n}} \setminus \mathbb{F}_{2^{n/2}} \; \text{tel que}\; \trace(\omega)=1$
\end{enumerate}
\end{corol}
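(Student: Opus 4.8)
The plan is to reduce both points to the transitivity of the trace (Proposition~\ref{transi}) applied to the tower $\mathbb{F}_{2}\subset\mathbb{F}_{2^{n/2}}\subset\mathbb{F}_{2^{n}}$. This tower makes sense precisely because $n$ is a nonzero even integer, so that $n/2$ is a positive integer, $\mathbb{F}_{2^{n/2}}$ is a genuine subfield of $\mathbb{F}_{2^{n}}$, and $[\mathbb{F}_{2^{n}}:\mathbb{F}_{2^{n/2}}]=2$. The whole argument rests on computing the relative trace $\trace$ on an element of the base field $\mathbb{F}_{2^{n/2}}$.

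For (i), I would first observe that for $x\in\mathbb{F}_{2^{n/2}}$ the relative trace is the two-term sum
$$\trace(x)=x+x^{2^{n/2}}=x+x=0,$$
using that $x^{2^{n/2}}=x$ for every $x\in\mathbb{F}_{2^{n/2}}$ (equivalently, this is point~(ii) of the trace theorem: on the base field $\trace$ is multiplication by the degree $2$, which is $0$ in characteristic $2$). Then transitivity gives, for such $x$,
$$\tr(x)=\textbf{Tr}_{\mathbb{F}_{2^{n/2}}}\!\bigl(\trace(x)\bigr)=\textbf{Tr}_{\mathbb{F}_{2^{n/2}}}(0)=0,$$
which is exactly the assertion $\mathbb{F}_{2^{n/2}}\subset\ker(\tr)$.

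For (ii), I would invoke the surjectivity of the relative trace $\trace$ (point~(i) of the trace theorem): there exists $\omega\in\mathbb{F}_{2^{n}}$ with $\trace(\omega)=1$. It then remains only to check that such an $\omega$ cannot lie in $\mathbb{F}_{2^{n/2}}$; but this is immediate from the computation in~(i), since if $\omega\in\mathbb{F}_{2^{n/2}}$ we would have $\trace(\omega)=0\neq 1$. Hence $\omega\in\mathbb{F}_{2^{n}}\setminus\mathbb{F}_{2^{n/2}}$, as required.

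There is essentially no genuine obstacle in this proof; the one point deserving care is the identity $\trace(x)=x+x^{2^{n/2}}$, i.e.\ that the relative trace is a sum of exactly two Frobenius conjugates, which is where the hypothesis "$n$ even, nonzero'' (equivalently, $[\mathbb{F}_{2^{n}}:\mathbb{F}_{2^{n/2}}]=2$) is actually used. Everything else is a direct application of the already-established properties of the trace.
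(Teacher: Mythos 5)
Your proof is correct and follows essentially the same route as the paper: point (i) is the identical computation $\trace(x)=x+x^{2^{n/2}}=0$ followed by transitivity, and for point (ii) you invoke surjectivity of $\trace$, which the paper's own remark explicitly identifies as equivalent to its slightly more "effective" normalization $\omega=x_{0}/\trace(x_{0})$. Your explicit check that $\omega\notin\mathbb{F}_{2^{n/2}}$ is a small, welcome addition that the paper leaves implicit.
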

 
\begin{proof}
\indent
\begin{enumerate}
\item
Par définition : Pour tout $x \in \mathbb{F}_{2^{n}}\;,\; \textbf{Tr}_{\mathbf{\mathbb{F}_{2^{n}}}/\mathbf{\mathbb{F}_{2^{n/2}}}}(x)=x+x^{2^{\frac{n}{2}}}$
et donc $ \textbf{Tr}_{\mathbf{\mathbb{F}_{2^{n}}}/\mathbf{\mathbb{F}_{2^{n/2}}}}$ est nulle sur $\mathbb{F}_{2^{n/2}}$
on conclut avec $\textbf{Tr}_{\mathbb{F}_{2^{n}}}=\textbf{Tr}_{\mathbb{F}_{2^{n/2}}}o\textbf{Tr}_{\mathbf{\mathbb{F}_{2^{n}}}/\mathbf{\mathbb{F}_{2^{n/2 }}}}$( cf. proposition \ref{transi})
\item $\trace{}$ est une forme $\mathbb{F}_{2^{n/2}}$-linéaire non nulle.Donc,il existe $x_{0} \in \mathbb{F}_{2^{n}}$ tel que $\trace(x_{0})\neq 0$. Il suffit de choisir
$\omega=\frac{x_{0}}{\trace(x_{0})}$ est conclure par la $\mathbb{F}_{2^{n/2}}$ -linéarité de la  $\trace{}$.
\end{enumerate}
\end{proof}
\textbf{Remarque:} le point $(2)$ peut-être directement prouvé on utilisant la surjectivité de la trace, seulement je voulais donner une construction effective.

\begin{pro}\label{clef}
 Soient $n$ un entier pair non nul, $q=2^{n/2}$ et $c\in \mathbb{F}_{2^{n}}$ vérifiant : $c^{q+1}=1$ alors 
 $(\frac{1}{c^{2^{n-1}}})^{q}=\frac{c}{c^{2^{n-1}}}$ de plus on a:
 \[
 \forall \omega \in \mathbb{F}_{2^{n}} \quad \frac{\omega+c{\omega}^{q}}{c^{2^{n-1}}} \in \mathbb{F}_{2^{n/2}}.
 \]

\end{pro}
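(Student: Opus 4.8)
L'id\'ee est d'introduire l'\'el\'ement auxiliaire $d:=c^{2^{n-1}}$ et de remarquer d'abord qu'il s'agit de l'unique racine carr\'ee de $c$ dans $\mathbb{F}_{2^{n}}$: comme $2\cdot 2^{n-1}=2^{n}$ et $c^{2^{n}}=c$ d'apr\`es (\ref{corps}), on a $d^{2}=c^{2^{n}}=c$. Par cons\'equent $c/c^{2^{n-1}}=d^{2}/d=d$, de sorte que la premi\`ere assertion revient \`a montrer que $(1/d)^{q}=d$, c'est-\`a-dire que $d^{q+1}=1$.

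Pour obtenir $d^{q+1}=1$, il suffit d'\'ecrire
\[
d^{q+1}=\bigl(c^{2^{n-1}}\bigr)^{q+1}=\bigl(c^{q+1}\bigr)^{2^{n-1}}=1^{2^{n-1}}=1,
\]
gr\^ace \`a l'hypoth\`ese $c^{q+1}=1$. On en d\'eduit aussit\^ot $d^{q}=d^{-1}$, d'o\`u $(1/c^{2^{n-1}})^{q}=d^{-q}=d=c/c^{2^{n-1}}$, ce qui \'etablit le premier point.

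Pour le second point, fixons $\omega\in\mathbb{F}_{2^{n}}$ et posons $x:=\dfrac{\omega+c\omega^{q}}{c^{2^{n-1}}}=\dfrac{\omega+c\omega^{q}}{d}$. D'apr\`es (\ref{corps}), les solutions de $X^{q}=X$ \'etant exactement les \'el\'ements de $\mathbb{F}_{q}=\mathbb{F}_{2^{n/2}}$, il suffit de v\'erifier que $x^{q}=x$. En utilisant $c=d^{2}$ on r\'ecrit $x=d^{-1}\omega+d\,\omega^{q}$; puis, en appliquant le Frobenius $y\mapsto y^{q}$ et en invoquant $d^{q}=d^{-1}$ (d\'ej\`a \'etabli) ainsi que $\omega^{q^{2}}=\omega^{2^{n}}=\omega$, on obtient
\[
x^{q}=d^{-q}\omega^{q}+d^{q}\omega^{q^{2}}=d\,\omega^{q}+d^{-1}\omega=x,
\]
donc $x\in\mathbb{F}_{2^{n/2}}$.

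Il n'y a gu\`ere d'obstacle: le seul point demandant un peu d'attention est la gestion de l'exposant $2^{n-1}$ et l'emploi r\'ep\'et\'e des relations $c^{q+1}=1$, $d^{2}=c$, $d^{q+1}=1$ et $\omega^{q^{2}}=\omega$. On pourrait aussi obtenir $d^{q}=d^{-1}$ en r\'eduisant l'exposant $2^{n-1}q=2^{3n/2-1}$ modulo $n$ via le corollaire \ref{coro1} (ce qui donne $2^{n/2-1}$), mais la factorisation de $c^{q+1}$ ci-dessus est plus directe.
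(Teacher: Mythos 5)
Votre preuve est correcte et suit pour l'essentiel la m\^eme d\'emarche que celle du texte : tout repose sur l'hypoth\`ese $c^{q+1}=1$ combin\'ee \`a $c^{2^{n}}=c$, votre \'el\'ement auxiliaire $d=c^{2^{n-1}}$ (avec $d^{2}=c$ et $d^{q+1}=1$) n'\'etant qu'un reconditionnement commode du calcul men\'e dans le texte \`a partir de $c=1/c^{q}$. Pour le second point, v\'erifier directement $x^{q}=x$ revient exactement \`a reconna\^itre, comme le fait le texte, que l'expression vaut $\trace\left(\omega/c^{2^{n-1}}\right)$ et appartient donc \`a $\mathbb{F}_{2^{n/2}}$.
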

\begin{proof}
 $ c^{q+1}=1\Leftrightarrow c=\frac{1}{c^{q}}$
$\Rightarrow $
\begin{align*}
\left(\frac{1}{c^{2^{n-1}}}\right)^{q} &=c^{2^{n-1}} \\
                            &=c^{2^{n}-2^{n-1}}\\
                            &=\frac{c^{2^{n}}}{c^{2^{n-1}}}\\
                            &=\frac{c}{c^{2^{n-1}}}
        \end{align*}
   
  Soit $\omega \in \mathbb{F}_{2^{n}}$.\\
  $$\frac{\omega+c{\omega}^{q}}{c^{2^{n-1}}}=\frac{w}{c^{2^{n-1}}}+{\omega}^{q} \frac{c}{c^{2^{n-1}}}=\frac{\omega}{c^{2^{n-1}}}+\left(\frac{\omega}{c^{2^{n-1}}}\right)^{q}\\
  = \trace\left(\frac{\omega}{c^{2^{n-1}}}\right) \in \mathbb{F}_{2^{n/2}}$$

\end{proof}
\subsection{Permutation particulière sur un corps fini}\label{per}
\begin{lem}\label{groupe}
Soit $G=<x>$ un groupe cyclique d'ordre $n$. Alors $ \forall d,\;d\mid n \;\mathrm{dans}\; \mathbb{N^*}$\\
il existe un unique sous groupe d'ordre $d$ de $G$. Il  est engendré par $x^{k}$, $k=\frac{n}{d}$

\end{lem}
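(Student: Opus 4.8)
Je traiterais séparément l'existence et l'unicité.

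Pour l'existence, je poserais $k = n/d$ et considérerais $H = \langle x^{k}\rangle$. Le point clef est le calcul de l'ordre de $x^{k}$ dans $G$ : cet ordre vaut $n/\gcd(n,k)$, et comme $k$ divise $n$ on a $\gcd(n,k) = k$, donc l'ordre de $x^{k}$ est $n/k = d$. Ainsi $H = \langle x^{k}\rangle$ est bien un sous-groupe cyclique d'ordre $d$, ce qui établit à la fois l'existence et la forme annoncée du générateur.

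Pour l'unicité, je partirais d'un sous-groupe $H'$ arbitraire d'ordre $d$. Par le théorème de Lagrange appliqué à $\langle h\rangle \subseteq H'$, tout $h \in H'$ vérifie $h^{d} = e$. Je caractériserais alors l'ensemble $S = \{\, y \in G : y^{d} = e \,\}$ : en écrivant $y = x^{i}$ avec $0 \le i < n$, on a $y^{d} = e \iff n \mid id$, et puisque $d \mid n$, en posant $n = dk$ ceci équivaut à $k \mid i$ ; donc $S = \{\, x^{jk} : 0 \le j < d \,\} = \langle x^{k}\rangle = H$, ensemble de cardinal exactement $d$. Il vient $H' \subseteq S = H$, puis $H' = H$ par égalité des cardinaux ; en particulier tout sous-groupe d'ordre $d$ coïncide avec $\langle x^{k}\rangle$.

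Le seul point réellement technique — plutôt qu'un obstacle — est la formule donnant l'ordre de $x^{k}$ et la description explicite de $S$ ; tout le reste n'est que routine. On aurait aussi pu commencer par rappeler que tout sous-groupe d'un groupe cyclique est cyclique, mais l'argument ci-dessus le contourne et fournit directement le générateur voulu.
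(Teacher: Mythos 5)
Votre preuve est correcte et compl\`ete : l'ordre de $x^{k}$ vaut bien $n/\gcd(n,k)=n/k=d$, et la caract\'erisation de l'ensemble $\{\,y\in G : y^{d}=e\,\}$ comme \'etant exactement $\langle x^{k}\rangle$ (de cardinal $d$) force tout sous-groupe d'ordre $d$ \`a co\"incider avec lui. Le papier \'enonce ce lemme sans aucune d\'emonstration, il n'y a donc pas d'argument auquel comparer le v\^otre ; votre r\'edaction comble cette lacune de mani\`ere standard et valable.
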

\begin{pro}\label{Per}
Soient $\mathbf{F}_{q}$ un corps fini, $i\in \mathbb{N}^{*}$.
\renewcommand{\labelenumi}{{\normalfont (\roman{enumi})}}
\begin{enumerate}
\item $X^{i}$ permute $\mathbf{F}_{q}$ si et seulement si $\gcd(i,q-1)=1$.
\item Le nombre de i\ieme{} puissance non nulle dans $\mathbf{F}_{q} $(\,\textit{i.e}$ \;\#{{\mathbb{F^{*}}_{q}}^{i}}\,$) est $(q-1)/\gcd(i,q-1)$.
\end{enumerate}
\end{pro}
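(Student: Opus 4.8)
The plan is to prove both statements using the cyclic structure of $\mathbb{F}_q^*$, which is a cyclic group of order $q-1$ by the standard theory of finite fields, combined with Lemme \ref{groupe} on subgroups of cyclic groups.

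For part (i): the map $x \mapsto x^i$ is clearly a map from $\mathbb{F}_q$ to itself fixing $0$, so it permutes $\mathbb{F}_q$ if and only if it permutes $\mathbb{F}_q^*$, i.e. if and only if the group endomorphism $\varphi\colon x \mapsto x^i$ of $\mathbb{F}_q^*$ is injective (equivalently bijective, since the group is finite). First I would observe that $\ker\varphi = \{x \in \mathbb{F}_q^* : x^i = 1\}$ is exactly the set of elements whose order divides both $i$ and $q-1$, hence divides $d := \gcd(i,q-1)$; by Lemme \ref{groupe} applied with $d \mid q-1$, this kernel is the unique subgroup of order $d$ in $\mathbb{F}_q^*$, so $\#\ker\varphi = d$. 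Therefore $\varphi$ is injective if and only if $d = 1$, i.e. $\gcd(i,q-1)=1$, which gives (i). Alternatively one can argue directly with a generator $g$: $g^a$ is an $i$-th power iff $i \mid a$ modulo $q-1$-periodicity, but the kernel argument is cleanest.

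For part (ii): I want to count $\#(\mathbb{F}_q^*)^i$, the image of $\varphi$. By the first isomorphism theorem for groups, $(\mathbb{F}_q^*)^i = \operatorname{Im}\varphi \cong \mathbb{F}_q^*/\ker\varphi$, so $\#(\mathbb{F}_q^*)^i = \#\mathbb{F}_q^* / \#\ker\varphi = (q-1)/d = (q-1)/\gcd(i,q-1)$, using the computation of the kernel size from part (i). If one wishes to avoid invoking quotient groups explicitly, the same count follows by noting that $\varphi$ is $d$-to-$1$ onto its image (each fibre is a coset of $\ker\varphi$, hence has exactly $d$ elements), so the image has $(q-1)/d$ elements.

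The only genuine step requiring care is the identification of $\ker\varphi$ with the unique subgroup of order $d$: one must check that $\{x : x^i = 1\}$ really equals $\{x : x^{d}=1\}$, which uses a Bézout relation $d = ui + v(q-1)$ together with $x^{q-1}=1$ for $x \in \mathbb{F}_q^*$ (equation (\ref{corps})), and then that this set is a subgroup of the expected order, which is precisely Lemme \ref{groupe}. I expect this to be the main (though still routine) obstacle; everything else is bookkeeping with the cyclic group $\mathbb{F}_q^*$.
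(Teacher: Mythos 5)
Your proof is correct and follows essentially the same route as the paper's: both study the group morphism $x\mapsto x^{i}$ on $\mathbb{F}_{q}^{*}$, identify its kernel with the set of $d$-th roots of unity where $d=\gcd(i,q-1)$, deduce (i) from $\#\mathrm{\mathcal{K}er}=d$, and obtain (ii) from the first isomorphism theorem. The only (immaterial) difference is that you count the kernel via the cyclic structure of $\mathbb{F}_{q}^{*}$ and the lemme \ref{groupe}, whereas the paper counts the $d$ distinct roots of the separable polynomial $X^{d}-1$, all of which lie in $\mathbb{F}_{q}^{*}$ because $X^{d}-1$ divides $X^{q-1}-1$.
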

\begin{proof}

Remarquons d'abord que $0$ est la seule solution de l'équation $X^{i}=0$.\\
Considérons le morphisme de groupe  $\mathfrak{F_{i}}\colon x\in {\mathbb{F}_{q}}^{*} \to x^{i}\in \mathbb{F^{*}}_{q}$. 
Calculons son noyau:\\
\[
\mathrm{\mathcal{K}er}(\mathfrak{F_{i}})=\{\,x \in {\mathbb{F}_{q}}^{*} \mid x^{i}=1\,\}
\]
On a  $x\in \mathrm{\mathcal{K}er}(\mathfrak{F_{i}})\Rightarrow \;\mathrm{ord}(x)\mid i \Rightarrow \mathrm{ord}(x)\mid d\,,\,\mathrm{ou}\; d=\gcd(i,q-1)\Rightarrow x\in \mathrm{\mathcal{K}er}(\mathfrak{F}_{d}) $
l'inverse est évidente. Donc:
\[
\mathrm{\mathcal{K}er}(\mathfrak{F_{i}})=\mathrm{\mathcal{K}er}(\mathfrak{F}_{d}) \;\mathrm{avec}\; d=\\gcd(i,q-1).
\]
Le polynôme $P(x)=X^{d}-1$ n'a que des racines simples. ( Voir Proposition \ref{double} ). 
\'{E}tant un polynôme de degré $d$, $P$ a donc $d$ racine distinct.
 $d\mid (q-1)$ implique  $X^{d}-1\mid X^{(q-1)}-1$. Donc tous les racines de $P$ sont dans ${\mathbb{F}_{q}}^{*}$.\\
Ceci entraîne $ \#{\mathrm{\mathcal{K}er}(\mathfrak{F}_{d})}=d$. \\
  $ \mathfrak{F_{i}}$ est un isomorphisme si et seulement si $d=1$, c-à-d $\gcd(i,q-1)=1$.

D'après le 1\up{er} théorème d'isomorphisme 
\[
{\mathbb{F}_{q}}^{*}/\mathrm{\mathcal{K}er}(\mathfrak{F_{i}}) \cong \mathcal{I}m(\mathfrak{F_{i}})
\]
or $\mathcal{I}m(\mathfrak{F_{i}}):= $\;les i\ieme{} puissances non nulles dans $\mathbf{F}_{q}$, ce qui conclut la preuve.
\end{proof}
\noindent  \textbf{Remarque:}
\begin{enumerate}
\item $\mathcal{I}m(\mathfrak{F_{i}})=\mathcal{I}m(\mathfrak{F}_d)$; c-à-d les i\ieme{} puissances non nulles dans $\mathbf{F}_{q}$ sont exactement les d\ieme{} puissances non nulles dans $\mathbf{F}_{q}$.
\item $\forall (x,y)\in {\mathbf{F}_{q}}^{2} \quad  x^{i}=y^{i} \Leftrightarrow x^{d}=y^{d}$ ( Découle de $\mathrm{\mathcal{K}er}(\mathfrak{F_{i}})=\mathrm{\mathcal{K}er}(\mathfrak{F}_{d})$ ).
\item Soit $\alpha$ un générateur de ${\mathbb{F}_{q}}^{*}$ \\
$\mathrm{\mathcal{K}er}(\mathfrak{F}_{d})$ est un sous-groupe de ${\mathbb{F}_{q}}^{*}$ d'ordre $d$, il est donc engendré par $\xi={\alpha}^{(q-1)/d}$ ( cf. lemme \ref{groupe} ).
\[
\mathrm{\mathcal{K}er}(\mathfrak{F}_{d})=\left\{\xi^{k};k=0,\ldots,d-1\right\}
\]
\item Définissons une relation d'équivalence sur ${\mathbb{F}_{q}}^{*}$ par: \\
$\forall (x,y) \in {{\mathbb{F}_{q}}^{*}}^2  ,\quad x \mathfrak{R}y  \; \mathrm{si et seulement si}\; y\in x\mathrm{\mathcal{K}er}(\mathfrak{F_{i}}) $\\
C'est bien une relation d'équivalence et : $\forall x \in {\mathbb{F}_{q}}^* \quad \mathcal{C}l(x)=x\mathrm{\mathcal{K}er}(\mathfrak{F}_{d})=\left\{x\xi^{k};k=0,\ldots,d-1\right\}$. Les classes forment une partition de l'ensemble en question:
\[
{\mathbb{F}_{q}}^{*}=\bigcup_{x\in \mathcal{I}}\mathcal{C}l(x)  \quad \mathrm{avec} \;\#\mathcal{I}=\frac{q-1}{d} \;\mathrm{et}\;\#\mathcal{C}l(x)=d .
\]

\end{enumerate}

\subsection{$\mathbb{F}_{2^n} \; \mathrm {versus}\; \mathbb{F}_{2^{n/2}}\times \mathbb{F}_{2^{n/2}}$} 

Certaines fonctions sont définies sur $\mathbb{F}_{2^{n/2}}\times \mathbb{F}_{2^{n/2}}$, on aimerait bien expliciter leur représentation univariée et pour cela il faut 
les définir sur  $\mathbb{F}_{2^n}$, nous allons voir comment:\par
Soit   $\omega \in \mathbb{F}_{2^{n}} / \mathbb{F}_{2^{n/2}}$. $(1,\omega)$ est une base du $\mathbb{F}_{2^{n/2}}$-espace vectorielle $\mathbb{F}_{2^{n}}$.
Pour tout $X \in \mathbb{F}_{2^n}$, il existe un unique couple $(x,y)$ dans $\mathbb{F}_{2^{n/2}}$ tel que 
\begin{equation}\label{E:firstequ}
X=x+\omega y
\end{equation} 
( \textit{i.e}$ \;\mathbb{F}_{2^{n}}=\mathbb{F}_{2^{n/2}}\oplus \omega \mathbb{F}_{2^{n/2}} ).$ \par
Nous allons expliciter $x$ et $y$ en fonction de $X$. Appliquons $ \trace $ a l'equation~\eqref{E:firstequ}, on obtient $ \trace(X)=y \trace(\omega)$
de même d'après~\eqref{E:firstequ}: $X\omega^{2^{n/2}}+X^{2^{n/2}}\omega =x(\omega^{2^{n/2}}+\omega)$
c-à-d $x=\frac{\trace\left(X\omega^{2^{n/2}}\right)}{\trace(\omega)}$   \par
Nous avons le $\mathbb{F}_{2^{n/2}}$-isomorphisme d'espace vectorielle suivant:

$ X \in \mathbb{F}_{2^{n}} \rightarrow (x,y)\in \mathbb{F}_{2^{n/2}}\times \mathbb{F}_{2^{n/2}}$ \\
avec $ x=\frac{\trace\left(X\omega^{2^{n/2}}\right)}{\trace(\omega)} \; \mathrm{et}\; y=\frac{\trace(X)}{\trace(\omega)}$\\
\noindent  \textbf{Remarque:}
\begin{enumerate}
\item D'après le corollaire~\ref{coro} $\omega$ peut-être choisi tel que $\trace(\omega)=1$ ce qui simplifie considérablement le calcul.
\item Dans le cas ou $n/2$ est impair, on a davantage de simplification, il suffit de choisir $\omega$ élément primitif de $\mathbb{F}_{4}$ 
on a d'après la corollaire~\ref{coro1}: $\omega^{2^{n/2}}=w^{2}$
\item Rien n'empêche de prendre $\omega=\alpha^{2^{n/2}-1}$ et dans ce cas:$\trace(\omega)=\omega+\omega^{-1}$
\end{enumerate}

\section{Fonctions Courbes}
Nous allons donner quelques résultats intéressants, pour une étude plus approfondie, nous envoyons à \cite{Car}.
\subsection{Transformée de Walsh}
La Transformée de \textbf{Walsh} d'une fonction booléenne $f$ et la transformée de \textbf{Fourier} de sa fonction signe. Son expression est donc:\\
 $$\forall u \in \mathbb{F}_{2}^{n}\quad \hat{\chi}_{f}(u)=\sum_{x\in\mathbb{F}_{2}^{n}} (-1)^{f(x)+u \cdot x} \; \textrm{ où $u \cdot x $ désigne le produit scalaire dans $\mathbb{F}_{2}^{n}$  }$$
 Si $\hat{\chi}_{f}(0)=0$ alors  $f$ est équilibrée.\\
 Si $ \forall u \in \mathbb{F}_{2}^{n} \; \hat{\chi}_{f}(u)=\pm 2^{n/2}$ alors $f$ est dite \textbf{courbe}.

\subsection{\'{E}tude d'une fonction booléenne particulière}
Dans la suite nous identifions $\mathbb{F}_{2^{n}}$ a $\mathbb{F}_{2}^{n}$ et nous posons $u\cdot x=\textbf{Tr}_{\mathbb{F}_{2^{n}}}(xu)$\\
 \'{E}tude de la fonction $f(x)=\textbf{Tr}_{\mathbb{F}_{2^{n}}}(a x^{i}) \;\mathrm {avec}\; a\neq0$.
 On peu déjà remarquer que $\gcd(i,2^{n}-1)\neq1$ sinon $f$ serait équilibrée, une telle fonction n'est jamais courbe.
 \subsubsection{Cas $a\in {\mathbb{F}}^{*i}_{2^{n}}$}
 \begin{pro}
 $f$ est courbe si et seulement si $g(x)=\textbf{Tr}_{\mathbb{F}_{2^{n}}}(x^{i}) $ est courbe.
 \end{pro}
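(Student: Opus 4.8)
The plan is to exhibit an explicit bijection between the problem "$f(x) = \tr(ax^i)$ is bent" and "$g(x) = \tr(x^i)$ is bent" by absorbing the multiplier $a$ into a substitution of the variable. Since we are in the case $a \in {\mathbb{F}_{2^n}^*}^i$, write $a = b^i$ for some $b \in \mathbb{F}_{2^n}^*$. Then for every $x \in \mathbb{F}_{2^n}$ we have $f(x) = \tr(b^i x^i) = \tr((bx)^i) = g(bx)$.

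First I would compute the Walsh transform of $f$ in terms of that of $g$ using this identity. For any $u \in \mathbb{F}_{2^n}$,
\[
\hat{\chi}_f(u) = \sum_{x \in \mathbb{F}_{2^n}} (-1)^{f(x) + \tr(ux)} = \sum_{x \in \mathbb{F}_{2^n}} (-1)^{g(bx) + \tr(ux)}.
\]
Now perform the change of variable $y = bx$; since $b \neq 0$, the map $x \mapsto bx$ is a permutation of $\mathbb{F}_{2^n}$, so $x = b^{-1}y$ and $\tr(ux) = \tr(ub^{-1}y)$. Hence
\[
\hat{\chi}_f(u) = \sum_{y \in \mathbb{F}_{2^n}} (-1)^{g(y) + \tr((ub^{-1})y)} = \hat{\chi}_g(ub^{-1}).
\]
So the Walsh spectrum of $f$ is just the Walsh spectrum of $g$ re-indexed by the permutation $u \mapsto ub^{-1}$ of $\mathbb{F}_{2^n}$; in particular the two spectra coincide as multisets of values.

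From here the conclusion is immediate: $f$ is bent if and only if $\hat{\chi}_f(u) = \pm 2^{n/2}$ for all $u$, which by the displayed identity holds if and only if $\hat{\chi}_g(v) = \pm 2^{n/2}$ for all $v$ (letting $v = ub^{-1}$ range over all of $\mathbb{F}_{2^n}$ as $u$ does), i.e. if and only if $g$ is bent. I do not expect a genuine obstacle here; the only point requiring a little care is the legitimacy of writing $a = b^i$, which is exactly the hypothesis $a \in {\mathbb{F}_{2^n}^*}^i$ of this subsection, together with the observation (from Proposition~\ref{Per} and its remarks) that this does not require $i$ to be coprime to $2^n-1$ — we just need $a$ to lie in the image of the $i$-th power map. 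The identification of $\mathbb{F}_{2^n}$ with $\mathbb{F}_2^n$ and the choice $u \cdot x = \tr(xu)$ made just before the statement are what let us phrase the dot product as a trace, which is what makes the substitution work cleanly.
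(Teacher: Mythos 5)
Votre preuve est correcte et suit exactement la m\'ethode du texte : on \'ecrit $a=b^{i}$, on observe que $f(x)=g(bx)$, puis le changement de variable $x\mapsto b^{-1}x$ dans la somme de Walsh donne $\hat{\chi}_{f}(u)=\hat{\chi}_{g}(u b^{-1})$, d'o\`u l'\'egalit\'e des spectres comme multi-ensembles. Rien \`a redire.
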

 \begin{proof}
 $ a \in \mathbb{F}^{*i}_{2^{n}} \Leftrightarrow \exists b\in {\mathbb{F}}^{*}_{2^{n}} \mid \, a=b^{i}.$
 Soit  $\beta \in \mathbb{F}_{2^{n}}\;$
 \begin{align*}
     \hat{\chi}_{f}(\beta)&=\sum_{x\in\mathbb{F}_{2}^{n}} (-1)^{\tr(ax^{i})+\tr(\beta x)}\\
                           &=\sum_{x\in\mathbb{F}_{2}^{n}} (-1)^{\tr((bx)^{i})+\tr(\beta x)}\\
                           &=\sum_{x\in\mathbb{F}_{2}^{n}} (-1)^{\tr(x^{i})+\tr(\frac{\beta}{b} x)}\\
                           &=\hat{\chi}_{g}\left(\frac{\beta}{b}\right)
 \end{align*}
 \end{proof}
\noindent  \textbf{Remarque:}
L'étude que j'ai menée sur le caractère courbe  de $g$ sur $\mathbb{F}_{2^{k}}\,\mathrm{où}\; k=4,\ldots,22$, a montré que $g$ est courbe uniquement sur
$\mathbb{F}_{2^{8}}$ avec l'exposant $i=(1+j)15,\;j=0,\ldots,15$. Ce qui correspond a un exposant de \textbf{Dillon}, étrangement $ \mathbb{F}_{2^{8}}$ c'est le corps où est définit l'\textbf{A.E.S}, y'a t-il une causalité??.
\subsubsection{Cas $a\notin {\mathbb{F}}^{*i}_{2^{n}}$}
Dans cette partie nous allons tirer profit de l'étude que nous avons réalisé dans la Proposition~\ref{Per}. Les notations sont celles de la-dite Proposition et de la remarque 
qui l'a suivie avec $q=2^{n}$.
Soit $\beta \in \mathbb{F}_{2^{n}} $\\ 
  $\xi={\alpha}^{(2^{n}-1)/d} ,\mathrm{ou}\; d=\gcd(2^{n}-1,i)$
 \begin{align*}
     \hat{\chi}_{f}(\beta \xi)&=\sum_{x\in\mathbb{F}_{2}^{n}} (-1)^{\tr(ax^{i})+\tr(\beta \xi x)}\\
                             &=\hat{\chi}_{f}(\beta) .\left(\,\mathrm{car}\, x\longmapsto x\xi \,\text{est une permutation sur}  \;\mathbb{F}_{2^{n}}\,\right)
  \end{align*}
  Pour résumer :
   \begin{equation*}
   \boxed{\forall y \in \mathcal{C}l(x)\quad\hat{\chi}_{f}(x)=\hat{\chi}_{f}(y) }
   \end{equation*}
   La transformée de \textbf{Walsh} est constante sur les classes,ceci peut conduire à un algorithme plus rapide(?). On peut aussi réécrire la \textbf{TW} autrement:\\
   
   \begin{align}
   \hat{\chi}_{f}(\beta)&=\sum_{x\in\mathbb{F}_{2}^{n}} (-1)^{\tr(ax^{i})+\tr(\beta x)}\\
                        &=1+\sum_{x\in\mathbb{F}^{*}_{2^{n}}} (-1)^{\tr(ax^{i})+\tr(\beta x)}\\
                        &=1+\sum_{x\in \mathcal{I}}(-1)^{\tr(ax^{i})}\sum_{y\in \mathcal{C}l(x) } (-1)^{\tr(\beta y)}\\
                        &=1+\sum_{x\in \mathcal{I}}(-1)^{\tr(ax^{i})} \sum_{k=0}^{d-1} (-1)^{\tr(\beta x \xi^{k})}\label{a}
   \end{align}
   
   \begin{pro}
   Si $f(x)=\textbf{Tr}_{\mathbb{F}_{2^{n}}}(a x^{i}) $ est \textbf{courbe}, alors:
   \begin{equation*}
   \hat{\chi}_{f}(0)=
   \begin{cases}
   2^{n/2}, &\text{ssi}\;\gcd(i,2^{n/2}+1)=1\\
   -2^{n/2}, &\text{ssi}\;\gcd(i,2^{n/2}-1)=1
   \end{cases}
   \end{equation*}
   
   \end{pro}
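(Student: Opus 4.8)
Le plan est le suivant. Je pose $q=2^{n/2}$ et $d=\gcd(i,2^{n}-1)$. Le caract\`ere courbe de $f$ donnant d\'ej\`a $\hat{\chi}_{f}(0)\in\{q,-q\}$ par d\'efinition, toute la difficult\'e est de d\'eterminer le signe en fonction de $\gcd(i,q-1)$ et de $\gcd(i,q+1)$. Je commencerais par noter que $d\neq 1$: sinon $X\mapsto X^{i}$ permuterait $\mathbb{F}_{2^{n}}$ (Proposition~\ref{Per}), donc $f$ serait \'equilibr\'ee et $\hat{\chi}_{f}(0)=0$, ce qui contredirait le caract\`ere courbe de $f$.

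Vient ensuite le point clef, une congruence modulo $d$. En sp\'ecialisant $\beta=0$ dans l'\'equation~\eqref{a}, la somme int\'erieure devient $\sum_{k=0}^{d-1}1=d$, d'o\`u
$$\hat{\chi}_{f}(0)=1+d\sum_{x\in\mathcal{I}}(-1)^{\tr(ax^{i})},$$
et la somme restante \'etant un entier, $\hat{\chi}_{f}(0)\equiv 1\pmod d$. Je factoriserais alors $d$: puisque $q$ est pair, $\gcd(q-1,q+1)=\gcd(q-1,2)=1$, et comme $2^{n}-1=(q-1)(q+1)$, on obtient
$$d=\gcd(i,q-1)\cdot\gcd(i,q+1).$$

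Il ne reste plus qu'\`a combiner ces deux faits. \emph{Implications directes.} Si $\hat{\chi}_{f}(0)=q$, la congruence donne $d\mid q-1$; alors $\gcd(i,q+1)$ divise \`a la fois $d$ (donc $q-1$) et $q+1$, d'o\`u $\gcd(i,q+1)\mid\gcd(q-1,q+1)=1$. Sym\'etriquement, $\hat{\chi}_{f}(0)=-q$ entra\^ine $d\mid q+1$, puis $\gcd(i,q-1)=1$. \emph{R\'eciproques.} Si $\gcd(i,q+1)=1$, on sait d\'ej\`a que $\hat{\chi}_{f}(0)=\pm q$; la valeur $-q$ forcerait aussi $\gcd(i,q-1)=1$ d'apr\`es ce qui pr\'ec\`ede, donc $d=\gcd(i,q-1)\gcd(i,q+1)=1$, ce qui est exclu; donc $\hat{\chi}_{f}(0)=q$. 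Le cas $\gcd(i,q-1)=1$ se traite de mani\`ere enti\`erement analogue. L'unique point demandant un peu de soin est l'articulation de l'\'equivalence: les implications directes tombent aussit\^ot de la congruence modulo $d$, tandis que les r\'eciproques reposent seulement sur l'exclusion de $d=1$, c'est-\`a-dire sur le fait qu'une fonction courbe n'est jamais \'equilibr\'ee; le reste est purement calculatoire et ne pr\'esente pas de difficult\'e s\'erieuse.
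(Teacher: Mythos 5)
Votre preuve est correcte et suit essentiellement la m\^eme d\'emarche que celle du papier : sp\'ecialisation de $\beta=0$ dans l'\'egalit\'e~\eqref{a} pour obtenir $\hat{\chi}_{f}(0)-1=d\sum_{x\in\mathcal{I}}(-1)^{\tr(ax^{i})}$, factorisation $d=\gcd(i,q-1)\gcd(i,q+1)$, puis conclusion par coprimalit\'e de $q-1$ et $q+1$. Vous traitez m\^eme plus explicitement que le papier le sens r\'eciproque du \og ssi \fg{} (via l'exclusion de $d=1$), ce qui est un l\'eger plus.
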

   \begin{proof}
   $d=\gcd(i,2^{n}-1)=\gcd(i,2^{n/2}-1).\gcd(i,2^{n/2}+1).$\\
    $k=\gcd(i,2^{n/2}-1)$ et $l=\gcd(i,2^{n/2}+1). $En posant $\beta=0$ dans l'égalité~\eqref{a}
      \begin{equation*}
      \hat{\chi}_{f}(0)-1=kl\sum_{x\in \mathcal{I}}(-1)^{\tr(ax^{i})}
      \end{equation*}
      
   \begin{description}
   \item[1\up{er} cas:] $\hat{\chi}_{f}(0)=2^{n/2}$ ceci entraîne $2^{n/2}-1=kl\sum_{x\in \mathcal{I}}(-1)^{\tr(ax^{i})}$\\
   Alors $l|2^{n/2}-1 \;\mathrm{comme}\;\gcd(l,2^{n/2}-1)=1$, cela implique  $l=1$

   \item[2\up{me} cas:] $\hat{\chi}_{f}(0)=-2^{n/2}$ le même raisonnement  conduit à $k=1$.
   \end{description}
   \end{proof}
 \section{Construction d'une classe APN a partir d'une fonction Bent}
  \begin{deff}
  Une $(n,n)-$fonction $F$ est dite \textbf{APN} si:\\
   $\forall a \in \mathbb{F^{*}}_{2^{n}} , \forall b \in \mathbb{F}_{2^{n}}$. L'équation : $F(x)+F(x+a)=b $ à au plus $0$ ou $2$ solutions .
  \end{deff}
  Notation: 
  \begin{equation}
  \forall a \in \mathbb{F^{*}}_{2^{n}}, D_{a}F(x)=F(x+a)+F(x). 
  \end{equation}
  \renewcommand{\labelenumi}{{\normalfont (\roman{enumi})}}
  \begin{pro}\cite{Car}
  \indent
  \begin{enumerate}
  \item  $B$ est courbe \textbf{si et seulement si} $D_{a}B$ est équilibré.
  \item $B$ quadratique  \textbf{alors} $D_{a}B$ est affine.
   \end{enumerate}
  \end{pro}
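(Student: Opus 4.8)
L'ingr\'edient principal de (i) est la transform\'ee de Walsh combin\'ee \`a la formule d'autocorr\'elation (Wiener--Khintchine). On d\'eveloppe le carr\'e
\[
\hat{\chi}_{B}(u)^{2}=\Bigl(\sum_{x}(-1)^{B(x)+u\cdot x}\Bigr)\Bigl(\sum_{y}(-1)^{B(y)+u\cdot y}\Bigr)=\sum_{x,y}(-1)^{B(x)+B(y)+u\cdot(x+y)},
\]
puis, en posant $y=x+a$, on obtient l'identit\'e cl\'e
\[
\hat{\chi}_{B}(u)^{2}=\sum_{a\in\mathbb{F}_{2}^{n}}(-1)^{u\cdot a}\,\mathcal{F}(a),\qquad \mathcal{F}(a):=\sum_{x\in\mathbb{F}_{2}^{n}}(-1)^{D_{a}B(x)},
\]
autrement dit $u\mapsto\hat{\chi}_{B}(u)^{2}$ est la transform\'ee de Fourier de $a\mapsto\mathcal{F}(a)$ sur $\mathbb{F}_{2}^{n}$. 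On notera que $\mathcal{F}(0)=2^{n}$ puisque $D_{0}B\equiv0$, et que $D_{a}B$ est \'equilibr\'ee si et seulement si $\mathcal{F}(a)=0$.

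Pour le sens direct : si $B$ est courbe, alors $\hat{\chi}_{B}(u)^{2}=2^{n}$ pour tout $u$; par inversion de Fourier, $\mathcal{F}(a)=2^{-n}\sum_{u}\hat{\chi}_{B}(u)^{2}(-1)^{a\cdot u}=\sum_{u}(-1)^{a\cdot u}$, quantit\'e nulle d\`es que $a\neq0$. Donc $D_{a}B$ est \'equilibr\'ee pour tout $a\neq0$. Pour le sens r\'eciproque : si $D_{a}B$ est \'equilibr\'ee pour tout $a\neq0$, alors $\mathcal{F}(a)=0$ pour $a\neq0$ et $\mathcal{F}(0)=2^{n}$; en reportant dans l'identit\'e cl\'e, $\hat{\chi}_{B}(u)^{2}=2^{n}$ pour tout $u$, d'o\`u $\hat{\chi}_{B}(u)=\pm2^{n/2}$, l'entier $\hat{\chi}_{B}(u)$ \'etant une somme de $\pm1$; ceci signifie exactement que $B$ est courbe.

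Pour (ii), l'id\'ee est que la d\'eriv\'ee discr\`ete abaisse le degr\'e alg\'ebrique d'au moins une unit\'e. En \'ecrivant $B$ sous forme alg\'ebrique normale $B(x)=\bigoplus_{1\le i<j\le n}a_{ij}x_{i}x_{j}\oplus\bigoplus_{i}b_{i}x_{i}\oplus c$, chaque mon\^ome quadratique contribue \`a $D_{a}B(x)=B(x+a)+B(x)$ le terme $(x_{i}+a_{i})(x_{j}+a_{j})+x_{i}x_{j}=a_{j}x_{i}+a_{i}x_{j}+a_{i}a_{j}$, qui est affine en $x$, tandis que les mon\^omes de degr\'e $\le1$ et la constante ne contribuent que des constantes. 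Ainsi $D_{a}B$ est de degr\'e $\le1$, c'est-\`a-dire affine.

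Le point d\'elicat est le sens r\'eciproque de (i) : il faut reconna\^itre que l'identit\'e cl\'e exprime une transform\'ee de Fourier sur $\mathbb{F}_{2}^{n}$ et l'exploiter correctement (v\'erifier au passage que $\hat{\chi}_{B}(u)^{2}=2^{n}$ force $\hat{\chi}_{B}(u)=\pm2^{n/2}$, ce qui est imm\'ediat par int\'egralit\'e). Le point (ii), lui, est purement calculatoire et ne pr\'esente aucune difficult\'e.
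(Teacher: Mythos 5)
Votre preuve est correcte. Notez que le texte ne d\'emontre pas cette proposition : il renvoie simplement au livre de Carlet. Votre argument pour (i) est pr\'ecis\'ement l'argument standard (identit\'e d'autocorr\'elation de Wiener--Khintchine : $u\mapsto\hat{\chi}_{B}(u)^{2}$ est la transform\'ee de Fourier de $a\mapsto\mathcal{F}(a)$), et votre point (ii) par la forme alg\'ebrique normale est le calcul attendu ; vous comblez donc une lacune du texte plut\^ot que de diverger de lui. Deux remarques mineures : l'appel \`a l'int\'egralit\'e de $\hat{\chi}_{B}(u)$ est superflu, puisque $\hat{\chi}_{B}(u)^{2}=2^{n}$ donne directement $\hat{\chi}_{B}(u)=\pm2^{n/2}$ dans $\mathbb{R}$ ; et comme la proposition est utilis\'ee ensuite pour des fonctions vectorielles ($(n,n/2)$-fonctions), il faudrait, pour cette g\'en\'eralit\'e, ramener l'\'enonc\'e au cas bool\'een via les fonctions composantes ($v\cdot D_{a}B=D_{a}(v\cdot B)$, et une fonction vectorielle est \'equilibr\'ee si et seulement si toutes ses composantes non nulles le sont), ce que votre r\'edaction ne mentionne pas mais qui s'ajoute sans difficult\'e.
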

  \begin{proof}
  voir livre \cite{Car}.
  \end{proof}
  \noindent  \textbf{Remarque:}
  Si $B$ une $(n,n/2)-$courbe quadratique, alors les solutions de $D_{a}B(x)=b$ avec $(a,b)\in \mathbb{F^{*}}_{2^{n}}\times \mathbb{F}_{2^{n/2}}$ est un sous-espaces affine de     $\mathbb{F}_{2^{n}}$ affinement isomorphe a  $\mathbb{F}_{2^{n/2}}$. Seulement c'est isomorphisme il n'est pas simple de l'explicité, d'autant plus qu'il depend de $a$ et $b$. On va voir que dans le cas de la fonction simple de \textbf{Mairona Mac Farland} ce n'est pas le cas. L' auteur de l'article \cite{art} a exploité cette idée, pour construire une classe 
  de fonction \text{APN}. \\
  \\
  Posons : $B(x)=X^{2^{n/2}+1}$  et soit $G$ une $(n,n/2)-$fonction.\par
  \noindent Et définissons  $F :\,x\in \mathbb{F}_{2^{n}}\rightarrow (B(x),G(x)) \in \mathbb{F}_{2^{n/2}}\times \mathbb{F}_{2^{n/2}}. $\\
  \textbf{Probleme:} Donner une condition nécessaire et suffisante portant sur $G$ pour que $F$ soit \textbf{APN}.\\
  $F$ APN ssi $\forall a \in \mathbb{F^{*}}_{2^{n}},\forall (c,d) \in \mathbb{F}_{2^{n/2}}\times \mathbb{F}_{2^{n/2}}\; D_{a}F(X)=(c,d)$ à au plus $0$ ou deux solutions  dans $\mathbb{F}_{2^{n}}$.
  \begin{equation}\label{E:apnequ}
   \begin{cases}  
   B(X)+B(X+a)&=c \\
   G(X)+G(X+a)&=d  
    \end{cases}
   \end{equation}
   
  or $D_{a}B(X)=\trace(a^{2^{n/2}}X)+a^{2^{n/2}+1}$ et donc:\\
  $D_{a}B(X)=c \Leftrightarrow \trace(a^{2^{n/2}}X)+a^{2^{n/2}+1}=c \Leftrightarrow \trace(X)=1+\frac{c}{a^{2^{n/2}+1}} \quad\textrm{chgement de variable $X \to aX$ } $\\
  Soit $b\in \mathbb{F}_{2^{n}} \; \textrm{tel que} \, \trace(b)=1+\frac{c}{a^{2^{n/2}+1}}\; \textrm{ Voir la surjéctivité de la trace } $ et donc:\\
  $D_{a}B(X)=c \Leftrightarrow \trace(X+b)=0 \Leftrightarrow \trace(X)=0 \quad \textrm{changement de variable $X \to X+b$ }\Leftrightarrow  X \in \mathbb{F}_{2^{n/2}}$\\
  L'îsomorphisme affine est  $ \varphi :X \in \mathbb{F}_{2^{n/2}} \stackrel{\sim}{\rightarrow} aX+b \in (D_{a}B)^{-1}(c) \;\textrm{avec} \;\trace(b)=1+\frac{c}{a^{2^{n/2}+1}}  $\\
  En reportant dans l'equation~\eqref{E:apnequ} \\
  $F$ APN ssi 
  \begin{equation}
  \forall a \in \mathbb{F^{*}}_{2^{n}} ,\forall b \in \mathbb{F}_{2^{n}}, \forall d \in \mathbb{F}_{2^{n/2}}.
  G(aX+b)+G(aX+b+a)=d
  \end{equation} à au plus $0$ ou deux solutions  sur $\mathbb{F}_{2^{n/2}}$.\\
  On a le théorème suivant:
  \begin{thm}\label{first}
  Soit $B(x)=X^{2^{n/2}+1}$  et soit $G$ une $(n,n/2)-$fonction, $L : \mathbb{F}_{2^{n/2}}\times \mathbb{F}_{2^{n/2}}\rightarrow \mathbb{F}_{2^{n}}$ un isomorphisme linéaire et
  $ F : X \in \mathbb{F}_{2^{n}} \rightarrow L(B(x),G(x)) \in \mathbb{F}_{2^{n}}$ alors $F$ est APN si et seulement si
  \begin{equation}\label{important}
  \forall a \in \mathbb{F^{*}}_{2^{n}} ,\forall b \in \mathbb{F}_{2^{n}}, \forall d \in \mathbb{F}_{2^{n/2}} \;
  G(aX+b)+G(aX+b+a)=d
  \end{equation} à $0$ ou $2$ solutions au plus dans $\mathbb{F}_{2^{n/2}}$.\\
  \end{thm}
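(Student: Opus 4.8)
The plan is first to strip off the linear isomorphism $L$, and then to exploit the explicit shape $B(X)=X^{q+1}$, where $q=2^{n/2}$, to make the first equation of the system trivial. For the first step I would note that, $L$ being linear and hence additive, one has $D_aF(X)=L\big(B(X+a),G(X+a)\big)+L\big(B(X),G(X)\big)=L\big(D_aB(X),D_aG(X)\big)$ for every $a\in\mathbb{F}_{2^{n}}^{*}$; since $L$ is bijective, $\#\{X:D_aF(X)=y\}=\#\{X:(D_aB(X),D_aG(X))=L^{-1}(y)\}$, and $L^{-1}(y)$ runs over $\mathbb{F}_{2^{n/2}}\times\mathbb{F}_{2^{n/2}}$ as $y$ runs over $\mathbb{F}_{2^{n}}$. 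Hence $F$ is APN if and only if, for every $a\in\mathbb{F}_{2^{n}}^{*}$ and every $(c,d)\in\mathbb{F}_{2^{n/2}}\times\mathbb{F}_{2^{n/2}}$, the system~\eqref{E:apnequ} has at most two solutions in $\mathbb{F}_{2^{n}}$. This is the only place where the particular $L$ matters, so the statement indeed holds for an arbitrary linear isomorphism.

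Next I would analyse the equation $D_aB(X)=c$. Expanding and using the Frobenius, $D_aB(X)=(X+a)^{q+1}+X^{q+1}=aX^{q}+a^{q}X+a^{q+1}$, and since $a^{q^{2}}=a^{2^{n}}=a$ this is $\trace(a^{q}X)+a^{q+1}$, where $a^{q+1}\in\mathbb{F}_{2^{n/2}}$ by the earlier corollary on the powers $x^{q+1}$ and $\trace$ denotes the relative trace. From Corollaire~\ref{coro}, together with the surjectivity and $\mathbb{F}_{2^{n/2}}$-linearity of $\trace$, one obtains $\ker(\trace)=\mathbb{F}_{2^{n/2}}$ by comparing $\mathbb{F}_{2}$-dimensions. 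Fix $a\neq0$: for $Y\in\mathbb{F}_{2^{n/2}}$ one has $\trace(Y)=0$, hence $\trace(a^{q+1}Y)=a^{q+1}\trace(Y)=0$, so $D_aB(aY+b)=\trace(a^{q}b)+a^{q+1}$ is independent of $Y$; moreover $a^{q+1}\in\mathbb{F}_{2^{n/2}}$ forces $a^{-q}\mathbb{F}_{2^{n/2}}=a\mathbb{F}_{2^{n/2}}$, so the kernel of the $\mathbb{F}_{2}$-linear map $b\mapsto\trace(a^{q}b)$ is precisely $a\mathbb{F}_{2^{n/2}}$. Consequently, for each $c\in\mathbb{F}_{2^{n/2}}$ and any $b$ with $\trace(a^{q}b)+a^{q+1}=c$, the solution set of $D_aB(X)=c$ is exactly $\{aY+b:Y\in\mathbb{F}_{2^{n/2}}\}=b+a\mathbb{F}_{2^{n/2}}$, and every coset $b+a\mathbb{F}_{2^{n/2}}$ occurs this way for a unique value of $c$.

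Then I would substitute $X=aY+b$ into the second equation $G(X)+G(X+a)=d$. As $aY+b+a=a(Y+1)+b$ still lies in the same fibre (in characteristic $2$, $D_a$ applied to any function is $a$-periodic), the system~\eqref{E:apnequ} with first coordinate $c$ becomes $G(aY+b)+G(aY+b+a)=d$ with $Y\in\mathbb{F}_{2^{n/2}}$, the solution count being preserved since $Y\mapsto aY+b$ is a bijection onto the fibre. Replacing $b$ by $b+a\lambda$ with $\lambda\in\mathbb{F}_{2^{n/2}}$ only reindexes the solutions via $Y\mapsto Y+\lambda$, so this count depends only on $(a,c)$. Putting the three steps together: $F$ is APN iff, for all $a\in\mathbb{F}_{2^{n}}^{*}$, all $c\in\mathbb{F}_{2^{n/2}}$ and all $d\in\mathbb{F}_{2^{n/2}}$, the equation $G(aX+b)+G(aX+b+a)=d$ (with $X$ now ranging over $\mathbb{F}_{2^{n/2}}$) has at most two solutions for some --- hence any --- lift $b$; and since, for fixed $a$, the map $c\mapsto b+a\mathbb{F}_{2^{n/2}}$ is a bijection and every $b\in\mathbb{F}_{2^{n}}$ is a legitimate lift (take $c=\trace(a^{q}b)+a^{q+1}$), this is exactly condition~\eqref{important}.

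The point I expect to require the most care is precisely this passage between the quantifiers ``$\forall c\in\mathbb{F}_{2^{n/2}}$'' and ``$\forall b\in\mathbb{F}_{2^{n}}$'': for the ``only if'' direction one must check that every pair $(a,b)\in\mathbb{F}_{2^{n}}^{*}\times\mathbb{F}_{2^{n}}$ really does arise as a parametrization of some level set of $D_aB$, and for the ``if'' direction one must know that the number of solutions of~\eqref{important} is unchanged when $b$ moves inside its coset $b+a\mathbb{F}_{2^{n/2}}$. Both rest on the single identity $a^{q+1}\in\mathbb{F}_{2^{n/2}}$, which is what forces the affine pieces $\{aY+b\}$ of the level sets of $D_aB$ to coincide with the cosets of $a\mathbb{F}_{2^{n/2}}$; getting that bookkeeping exactly right is the main obstacle, everything else being the trace manipulations above.
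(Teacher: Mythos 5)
Votre preuve est correcte et suit essentiellement la m\^eme route que celle du papier : on se ram\`ene d'abord au cas $L=\mathrm{id}$, puis on identifie chaque ensemble de niveau $(D_aB)^{-1}(c)$ \`a $\mathbb{F}_{2^{n/2}}$ via l'application affine $X\mapsto aX+b$, et on transporte la seconde \'equation du syst\`eme. Vous \^etes en fait plus soigneux que le papier sur les deux points que vous signalez vous-m\^eme (la r\'eduction modulo $L$ et la correspondance exacte entre le quantificateur sur $c\in\mathbb{F}_{2^{n/2}}$ et celui sur $b\in\mathbb{F}_{2^{n}}$), points que le papier passe sous silence.
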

  
  \begin{lem} \label{gold}
  \indent
  \begin{enumerate}
  \item  $ \forall a,b \in \mathbb{F}_{2^{r}} , (aX+b)^{2^{k}+2^{j}}+(aX+a+b)^{2^{k}+2^{j}}=a^{2^{k}+2^{j}}(X^{2^{k}}+X^{2^j}+1)+b^{2^{k}}a^{2^{i}}+b^{2^{i}}a^{2^{k}}$
  \item Soient $i$ et $r$ premiers entre eux, $c$ élément de $\mathbb{F}_{2^{r}}$, alors l'équation : $ X^{2^{i}}+X+c=0$ à $0$ ou $2$ solutions au plus dans $\mathbb{F}_{2^{r}}$
  \end{enumerate}
  \end{lem}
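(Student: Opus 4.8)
\noindent Le plan est de traiter les deux points s\'epar\'ement; tous deux reposent uniquement sur l'additivit\'e du Frobenius $y\mapsto y^{2^{m}}$ en caract\'eristique~$2$. Pour le point~(1), je poserais $u=aX+b$, de sorte que $aX+a+b=u+a$. Le Frobenius \'etant additif, $(u+a)^{2^{k}+2^{j}}=(u^{2^{k}}+a^{2^{k}})(u^{2^{j}}+a^{2^{j}})$; en d\'eveloppant ce produit et en lui ajoutant $u^{2^{k}+2^{j}}=u^{2^{k}}u^{2^{j}}$, les deux occurrences de $u^{2^{k}}u^{2^{j}}$ s'annulent (caract\'eristique~$2$) et il reste $u^{2^{k}}a^{2^{j}}+a^{2^{k}}u^{2^{j}}+a^{2^{k}+2^{j}}$. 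Je substituerais ensuite $u^{2^{k}}=a^{2^{k}}X^{2^{k}}+b^{2^{k}}$ et $u^{2^{j}}=a^{2^{j}}X^{2^{j}}+b^{2^{j}}$: les termes contenant $X$, regroup\'es avec le terme constant $a^{2^{k}+2^{j}}$, donnent $a^{2^{k}+2^{j}}(X^{2^{k}}+X^{2^{j}}+1)$, et les deux termes restants donnent $b^{2^{k}}a^{2^{j}}+b^{2^{j}}a^{2^{k}}$, ce qui est la formule voulue (il faut lire $2^{j}$ l\`a o\`u le second membre de l'\'enonc\'e porte $2^{i}$). Ce point est un simple calcul formel, sans obstacle.

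Pour le point~(2), j'\'ecrirais l'\'equation sous la forme $L(X)=c$, o\`u $L\colon X\in\mathbb{F}_{2^{r}}\mapsto X^{2^{i}}+X$ est $\mathbb{F}_{2}$-lin\'eaire (toujours par additivit\'e du Frobenius). Tout se ram\`ene au calcul de $\mathrm{\mathcal{K}er}(L)=\{X\in\mathbb{F}_{2^{r}}\mid X^{2^{i}}=X\}$: les solutions de $X^{2^{i}}=X$ \'etant exactement les \'el\'ements de $\mathbb{F}_{2^{i}}$ (\'equation~\eqref{corps} et sa r\'eciproque), on obtient $\mathrm{\mathcal{K}er}(L)=\mathbb{F}_{2^{i}}\cap\mathbb{F}_{2^{r}}=\mathbb{F}_{2^{\gcd(i,r)}}=\mathbb{F}_{2}$ gr\^ace au corollaire~\ref{inter} et \`a l'hypoth\`ese $\gcd(i,r)=1$. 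Donc $\#\mathrm{\mathcal{K}er}(L)=2$, et l'ensemble des solutions de $L(X)=c$, s'il n'est pas vide, est un translat\'e de ce noyau; il poss\`ede donc $0$ ou $2$ \'el\'ements.

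Le seul point qui demande un peu d'attention est de ne pas affirmer trop vite que le noyau vaut $\mathbb{F}_{2^{i}}$ en entier: ce corps n'est en g\'en\'eral pas contenu dans $\mathbb{F}_{2^{r}}$, et c'est pr\'ecis\'ement la coprimalit\'e de $i$ et $r$ qui, via le corollaire~\ref{inter}, r\'eduit l'intersection $\mathbb{F}_{2^{i}}\cap\mathbb{F}_{2^{r}}$ au corps premier $\mathbb{F}_{2}=\{0,1\}$. Ce lemme servira d'outil technique pour le th\'eor\`eme~\ref{first}: le point~(1) calcule l'expression $G(aX+b)+G(aX+a+b)$ associ\'ee \`a une fonction de Gold $G$ (et montre qu'elle est affine en les mon\^omes $X^{2^{k}}$, $X^{2^{j}}$), et le point~(2) borne alors le nombre de solutions de l'\'equation~\eqref{important}.
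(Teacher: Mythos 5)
Votre preuve est correcte, et les deux points appellent des commentaires distincts. Pour le point~(1), le texte se contente d'indiquer que le calcul est imm\'ediat~; votre d\'eveloppement via $u=aX+b$ et l'additivit\'e du Frobenius est pr\'ecis\'ement ce calcul, et vous relevez \`a juste titre la coquille de l'\'enonc\'e ($2^{i}$ au lieu de $2^{j}$ dans le second membre). Pour le point~(2), en revanche, vous suivez une route r\'eellement diff\'erente de celle du texte~: celui-ci invoque comme bo\^{\i}te noire le fait que $X\mapsto X^{2^{i}+1}$ est APN (th\'eor\`eme de Gold), dont l'\'enonc\'e du lemme est une reformulation directe (l'\'equation $X^{2^{i}}+X+c=0$ est, \`a translation et normalisation pr\`es, celle qui gouverne la d\'eriv\'ee discr\`ete de $X^{2^{i}+1}$). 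Vous d\'emontrez au contraire le r\'esultat de mani\`ere autonome~: lin\'earit\'e de $L\colon X\mapsto X^{2^{i}}+X$, calcul du noyau $\mathrm{\mathcal{K}er}(L)=\mathbb{F}_{2^{i}}\cap\mathbb{F}_{2^{r}}=\mathbb{F}_{2^{\gcd(i,r)}}=\mathbb{F}_{2}$ via le corollaire~\ref{inter}, puis l'observation que l'ensemble des solutions de $L(X)=c$ est vide ou est un translat\'e du noyau. C'est exactement la preuve standard du th\'eor\`eme de Gold~; votre version a l'avantage de rester interne au document (elle n'utilise que des r\'esultats d\'ej\`a \'etablis dans la premi\`ere section) et d'\'eviter toute apparence de circularit\'e, l\`a o\`u le texte renvoie \`a un r\'esultat ext\'erieur dont la d\'emonstration usuelle est pr\'ecis\'ement l'argument que vous donnez. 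Votre mise en garde finale (ne pas confondre le noyau avec $\mathbb{F}_{2^{i}}$ tout entier, qui n'est pas contenu dans $\mathbb{F}_{2^{r}}$ en g\'en\'eral) est exactement le point o\`u l'hypoth\`ese $\gcd(i,r)=1$ intervient.
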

  \begin{proof}
  Le point (i) est calculatoire, le point (ii) découle du fait que, $ X \rightarrow X^{2^{i}+1} $ est APN (\textbf{Gold}). 
  \end{proof}
  \subsubsection{Familles de fonctions connues}
     Je commencerais par trois lemmes que j'ai jugé fort utile.
     \begin{lem}\label{parité}
     Pour tout entier $i$
     \renewcommand{\labelenumi}{{\normalfont (\roman{enumi})}}
     \begin{enumerate}
     \item $2^{i}+1$ est divisible par $3$ ssi $i$ est impair.
     \item $2^{i}-1$ est divisible par $3$ ssi $i$ est pair.
     \end{enumerate}
     \end{lem}
     \begin{proof}
     Tout entier $i$ peut s'écrire : $i=2k+\epsilon \;\textrm{avec}\,(k,\epsilon)\in \mathbb{N}\times \left\{0,1\right\}$
     \begin{enumerate}
     \item $2^{i}+1=2^{2k+\epsilon}+1=4^{k}2^{\epsilon}+1\equiv (2^{\epsilon}+1)\mod3$ \\
        donc $2^{i}+1 \equiv 0 \mod3 \; \textrm{ssi}\; \epsilon =1$
      \item $2^{i}-1=2^{2k+\epsilon}-1=4^{k}2^{\epsilon}-1 \equiv (2^{\epsilon}-1)\mod3$ \\
      donc $2^{i}-1 \equiv 0 \mod3 \; \textrm{ssi}\; \epsilon =0$
     \end{enumerate}
     \end{proof}
     \begin{lem}\label{ssi}
     Soient $n$ un entier pair et $i$ un entier vérifiant $\gcd(i,n/2)=1$ alors 
     \begin{equation*}
   \gcd(2^{i}+1,2^{n/2}+1)=
   \begin{cases}
      1, &\text{ssi $i$ pair}\\
      1, &\text{ssi $i$ impair et $n/2$ pair}\\
      3, &\text{ssi $i$ impair et $n/2$ impair}
   \end{cases}
   \end{equation*}
   \end{lem}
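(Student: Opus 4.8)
The plan is to reduce the computation of $\gcd(2^{i}+1,\,2^{n/2}+1)$ to a $\gcd$ of exponents by means of the classical identity from Proposition~\ref{gcd}, which gives $\gcd(X^{s}-1,X^{n}-1)=X^{\gcd(s,n)}-1$ in $\mathbf{K}[X]$; specialising to $\mathbf{K}=\mathbb{F}_2$ and evaluating at $X=2$ one gets $\gcd(2^{s}-1,2^{n}-1)=2^{\gcd(s,n)}-1$ in $\mathbb{N}$ (the evaluation step is legitimate because the cofactors are integers). The standard trick for sums instead of differences is to write $2^{i}+1=\dfrac{2^{2i}-1}{2^{i}-1}$ and to compute $\gcd(2^{2i}-1,\,2^{n}-1)=2^{\gcd(2i,n)}-1$ and $\gcd(2^{i}-1,\,2^{n}-1)=2^{\gcd(i,n)}-1$, where here I abbreviate $n$ for what the statement calls $n/2$; from these two one extracts $\gcd(2^{i}+1,\,2^{n/2}+1)$ by a short case analysis on whether $\gcd(2i,n/2)$ equals $\gcd(i,n/2)$ or $2\gcd(i,n/2)$.

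Concretely, under the hypothesis $\gcd(i,n/2)=1$ the only two possibilities are $\gcd(2i,n/2)=1$ (when $n/2$ is odd) or $\gcd(2i,n/2)=2$ (when $n/2$ is even). First I would dispose of the case $i$ even: then $2^{i}+1\equiv 2\pmod 3$, and more to the point $2^{i}+1$ is a sum of the form $4^{k}+1$; since $2^{i}-1$ and $2^{i}+1$ are consecutive odd numbers their $\gcd$ with anything can be separated, and the computation $\gcd(2^{i}+1,2^{n/2}+1)\mid \gcd(2^{2i}-1,2^{n}-1)/(2^{\gcd(i,n/2)}-1)$ collapses to $1$ because $\gcd(2i,n/2)=\gcd(i,n/2)=1$ forces $n/2$ odd and then $2^{2i}-1=(2^{i}-1)(2^{i}+1)$ shares only $2^{\gcd(2i,n/2)}-1=1$ with $2^{n}-1$ beyond the $2^{i}-1$ part — hence the common factor with $2^{n/2}+1$ is $1$. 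For $i$ odd and $n/2$ even, one uses Lemma~\ref{parit�}: $3\mid 2^{i}+1$ but $3\nmid 2^{n/2}-1$ is false — rather $3\mid 2^{n/2}-1$ since $n/2$ even, so $3$ is a common factor of $2^{i}+1$ with $2^{n/2}-1$, not with $2^{n/2}+1$; combined with the exponent computation $\gcd(2i,n/2)=2$, $\gcd(i,n/2)=1$ giving a bare factor $2^{2}-1=3$ that is absorbed into the $2^{n/2}-1$ side, one concludes the $\gcd$ with $2^{n/2}+1$ is $1$. Finally for $i$ odd and $n/2$ odd, Lemma~\ref{parit�} gives $3\mid 2^{i}+1$ and $3\mid 2^{n/2}+1$, so $3$ divides the $\gcd$; and the exponent bound $\gcd(2^{i}+1,2^{n/2}+1)\mid \gcd(2^{2i}-1,2^{2\cdot n/2}-1)=2^{\gcd(2i,n/2)}-1=2^{2}-1=3$ shows it is exactly $3$.

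I expect the main obstacle to be bookkeeping rather than depth: carefully legitimising the passage from the polynomial identity $\gcd(X^{s}-1,X^{n}-1)=X^{\gcd(s,n)}-1$ to the integer identity $\gcd(2^{s}-1,2^{n}-1)=2^{\gcd(s,n)}-1$, and then cleanly isolating the "$+1$" part of $2^{2m}-1=(2^{m}-1)(2^{m}+1)$ from the "$-1$" part when computing a $\gcd$ with another term of the same shape. The cleanest route is to prove, as a small self-contained sublemma, that for positive integers $a,b$ one has $\gcd(2^{a}+1,2^{b}+1)=2^{\gcd(a,b)}+1$ if $a/\gcd(a,b)$ and $b/\gcd(a,b)$ are both odd, and $\gcd(2^{a}+1,2^{b}+1)=1$ otherwise; once that sublemma is in hand, the three cases of Lemma~\ref{ssi} follow immediately by plugging in $a=i$, $b=n/2$ and reading off parities via $\gcd(i,n/2)=1$, with Lemma~\ref{parit�} only needed to double-check the residue $3$ in the last case.
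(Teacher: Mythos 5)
Your plan is correct and follows essentially the same route as the paper: both reduce to $\gcd(2^{2i}-1,2^{n}-1)=2^{\gcd(2i,n)}-1=3$ via the factorisation $2^{2i}-1=(2^{i}-1)(2^{i}+1)$ (and $2^{n}-1=(2^{n/2}-1)(2^{n/2}+1)$), then use Lemma~\ref{parit�} to decide whether the factor $3$ lands in $\gcd(2^{i}+1,2^{n/2}+1)$ or in $\gcd(2^{i}+1,2^{n/2}-1)$. Only watch the notational slip near the end: the bound should read $\gcd(2^{2i}-1,2^{n}-1)=2^{\gcd(2i,n)}-1=3$, not $2^{\gcd(2i,n/2)}-1$.
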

   \begin{proof}
    
   On a d'un coté $\gcd(2^{2i}-1,2^{n}-1)=\gcd(2^{i}+1,2^{n}-1).\gcd(2^{i}-1,2^{n}-1)=\gcd(2^{i}+1,2^{n/2}-1)\gcd(2^{i}+1,2^{n/2}+1)(2^{\gcd(i,n)}-1)$.\\
   de l'autre $\gcd(2^{2i}-1,2^{n}-1)=2^{2.\gcd(i,n/2)}-1=3$.
   Soit \begin{equation*}
   \boxed{3=\gcd(2^{i}+1,2^{n/2}-1)\gcd(2^{i}+1,2^{n/2}+1)(2^{\gcd(i,n)}-1)}
   \end{equation*}
   on conclut on traitons selon la parité de $i$ et on utilisant le lemme~\ref{parité}.
   \end{proof}
   \begin{lem}\label{solut}
   
   Soit $q=2^{n/2}$, les solutions de l'équation 
   \begin{equation}\label{equ:sol}
   X^{q+1}+1=0
   \end{equation}
   sont exactement les éléments de $\mathbb{F}_{2^{n}}^{*(q-1)}$
   \end{lem}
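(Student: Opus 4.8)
The plan is to identify the solution set of~\eqref{equ:sol} with the unique subgroup of order $q+1$ of the cyclic group $\mathbb{F}_{2^{n}}^{*}$, and then to check that $\mathbb{F}_{2^{n}}^{*(q-1)}$ is precisely that subgroup. Everything reduces to two cardinality counts that must both come out equal to $q+1$, together with one elementary containment.

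First I would dispose of $X=0$, which is not a solution since $0^{q+1}+1=1\neq 0$; hence every solution lies in $\mathbb{F}_{2^{n}}^{*}$ and, in characteristic $2$, equation~\eqref{equ:sol} reads $X^{q+1}=1$. Now $2^{n}-1=q^{2}-1=(q-1)(q+1)$, so $q+1\mid 2^{n}-1$ and therefore $X^{q+1}-1\mid X^{2^{n}-1}-1$ (corollary of Proposition~\ref{gcd}). The polynomial $X^{q+1}-1$ has no square factor by Proposition~\ref{double} (its derivative is $(q+1)X^{q}=X^{q}$, which is coprime to it), so it has $q+1$ pairwise distinct roots; and all of them lie in $\mathbb{F}_{2^{n}}$, being roots of $X^{2^{n}-1}-1$. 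Consequently the solution set $S$ of~\eqref{equ:sol} has exactly $q+1$ elements, and as the set of elements of $\mathbb{F}_{2^{n}}^{*}$ annihilated by the exponent $q+1$ it is a subgroup of $\mathbb{F}_{2^{n}}^{*}$.

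On the other side, Proposition~\ref{Per}(ii) applied with exponent $q-1$ over the field $\mathbb{F}_{2^{n}}$ gives $\#\mathbb{F}_{2^{n}}^{*(q-1)}=(2^{n}-1)/\gcd(q-1,2^{n}-1)=(2^{n}-1)/(q-1)=q+1$, using $q-1\mid 2^{n}-1$. Moreover $\mathbb{F}_{2^{n}}^{*(q-1)}$ is a subgroup of $\mathbb{F}_{2^{n}}^{*}$, and each of its elements satisfies~\eqref{equ:sol}: if $y\in\mathbb{F}_{2^{n}}^{*}$ then $(y^{q-1})^{q+1}=y^{q^{2}-1}=y^{2^{n}-1}=1$. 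Hence $\mathbb{F}_{2^{n}}^{*(q-1)}\subseteq S$, and since both are subgroups of $\mathbb{F}_{2^{n}}^{*}$ with the same cardinality $q+1$ they coincide; alternatively, invoke Lemma~\ref{groupe}, by which the cyclic group $\mathbb{F}_{2^{n}}^{*}$ has a unique subgroup of order $q+1$.

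There is no genuine obstacle here: the only points requiring care are that the two counts really do both yield $q+1$ and that the inclusion is routed through $y^{(q-1)(q+1)}=y^{2^{n}-1}$. If one wishes to bypass the root-counting step entirely, the size of $S$ can be obtained directly from the computation of $\mathrm{\mathcal{K}er}(\mathfrak{F}_{q+1})$ in the proof of Proposition~\ref{Per}, since $\gcd(q+1,2^{n}-1)=q+1$.
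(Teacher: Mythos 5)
Votre preuve est correcte et suit essentiellement la m�me d�marche que celle du texte : inclusion de $\mathbb{F}_{2^{n}}^{*(q-1)}$ dans l'ensemble des solutions via $(y^{q-1})^{q+1}=y^{2^{n}-1}=1$, puis double comptage ($q+1$ de chaque c�t�) par la proposition~\ref{Per} et la simplicit� des racines. Vous ajoutez seulement la v�rification, pass�e sous silence dans le texte, que les $q+1$ racines de $X^{q+1}-1$ sont bien toutes dans $\mathbb{F}_{2^{n}}$ (car $q+1\mid 2^{n}-1$), ce qui est un compl�ment utile mais ne change pas l'argument.
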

   \begin{proof}
   En effet, soit $x\in \mathbb{F}_{2^{n}}^{*(q-1)}$, donc il existe $y\in \mathbb{F}_{2^{n}}^{*}$ vérifiant $x=y^{q-1}$, soit $x^{q+1}=y^{q^{2}-1}=1$ donc $x$ est solution de l'équation~\ref{equ:sol}. Or d'après la proposition~\ref{Per}, $\,\#{\mathbb{F}_{2^{n}}^{*(q-1)}}= q+1$.\\
   D'un autre coté les solutions de l'équation~\eqref{equ:sol} sont simples voir proposition \ref{double}, ils sont au nombre de $q+1$, vue que son degré est $q+1$. Ce qui achève la preuve.
   \end{proof}
  Dorénavant et dans toutes la suite, nous prendrons pas en compte, les termes de $\mathbb{F}_{2^{n/2}}$ indépendants de $X$ qui apparaissent dans $G(aX+b')+G(aX+b'+a)$, puisqu'on peut toujours les affectés à $d$ dans l'égalité~\eqref{important}.
  \begin{corol}\label{faux}
   La fonction 
    $F(X)=X^{2^{2i}+2^{i}}+bX^{q+1}+cX^{q(2^{2i}+2^{i})}$ où $\gcd(i,n/2)=1,\; q=2^{n/2}, \;(c,b )\in {\mathbb{F}_{2^{n}}}^{2},\;\textrm{tel que : }\;
     c^{q+1}=1,\; c \notin \left\{\lambda ^{(2^{i}+1)(q-1)}, \;\lambda \in \mathbb{F}_{2^{n}} \right\} \,\textrm{et }\; cb^{q}+b\neq 0$  est  \textbf{APN}
      \end{corol}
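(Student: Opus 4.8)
The plan is to realize $F$ in the normal form required by Théorème~\ref{first} and then to read off, from the derivative equation~\eqref{important}, why exactly the two hypotheses on $b$ and $c$ are imposed: the condition $cb^{q}+b\neq0$ is what makes the decomposing map invertible, and the condition $c\notin\{\lambda^{(2^{i}+1)(q-1)}:\lambda\in\mathbb{F}_{2^{n}}\}$ is what prevents the (affine-)linear part of the derivative from degenerating.

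First I would put $M=2^{2i}+2^{i}$ and set
$$B(X)=X^{q+1},\qquad G(X)=\frac{X^{M}+cX^{qM}}{c^{2^{n-1}}},\qquad L(u,v)=bu+c^{2^{n-1}}v .$$
By Proposition~\ref{clef} applied to $\omega=X^{M}$, the function $G$ takes its values in $\mathbb{F}_{2^{n/2}}$, so it is a genuine $(n,n/2)$-function, and one checks at once that $L(B(X),G(X))=bX^{q+1}+X^{M}+cX^{qM}=F(X)$. It remains to see that $L$ is an isomorphism, i.e. that $b$ and $c^{2^{n-1}}$ are $\mathbb{F}_{2^{n/2}}$-linearly independent, i.e. $b/c^{2^{n-1}}\notin\mathbb{F}_{2^{n/2}}$. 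Since $c^{q+1}=1$ gives $c^{q}=c^{-1}$ and hence $(c^{2^{n-1}})^{q}=c^{-2^{n-1}}$, one gets $\bigl(b/c^{2^{n-1}}\bigr)^{q}=b^{q}c^{2^{n-1}}$, which equals $b/c^{2^{n-1}}$ if and only if $cb^{q}+b=0$. Thus the hypothesis $cb^{q}+b\neq0$ makes $L$ bijective, and Théorème~\ref{first} applies: $F$ is APN if and only if for all $a\in\mathbb{F}^{*}_{2^{n}}$, $b'\in\mathbb{F}_{2^{n}}$, $d\in\mathbb{F}_{2^{n/2}}$, the equation $G(aX+b')+G(aX+a+b')=d$ has at most two solutions in $\mathbb{F}_{2^{n/2}}$.

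Next I would compute this derivative. Applying Lemme~\ref{gold}(i) with $(k,j)=(2i,i)$, and using that the $q$-power map is additive and fixes $\mathbb{F}_{2^{n/2}}$ pointwise — so that $(aX+b')^{qM}+(aX+a+b')^{qM}$ is the $q$-th power of $(aX+b')^{M}+(aX+a+b')^{M}$, and $(X^{2^{2i}}+X^{2^{i}}+1)^{q}=X^{2^{2i}}+X^{2^{i}}+1$ for $X\in\mathbb{F}_{2^{n/2}}$ — one obtains
$$c^{2^{n-1}}\bigl(G(aX+b')+G(aX+a+b')\bigr)=(a^{M}+ca^{qM})\,(X^{2^{2i}}+X^{2^{i}}+1)+\kappa ,$$
where $\kappa\in\mathbb{F}_{2^{n/2}}$ (again by Proposition~\ref{clef}) is independent of $X$. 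Dividing by $c^{2^{n-1}}$ and writing $\gamma=(a^{M}+c(a^{M})^{q})/c^{2^{n-1}}\in\mathbb{F}_{2^{n/2}}$ (Proposition~\ref{clef}), and absorbing the $X$-free part into $d$ as permitted by the remark preceding the statement, the equation becomes $\gamma\,(X^{2^{2i}}+X^{2^{i}})=d'$ for an arbitrary $d'\in\mathbb{F}_{2^{n/2}}$.

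The crux is then that $\gamma\neq0$ for every $a\in\mathbb{F}^{*}_{2^{n}}$. Indeed, $\gamma=0$ would mean $a^{M}+c(a^{M})^{q}=0$, i.e. $c=(a^{M})^{1-q}=(a^{-1})^{(2^{2i}+2^{i})(q-1)}$; since $q-1$ is odd, $\lambda\mapsto\lambda^{2^{i}}$ permutes $\mathbb{F}_{2^{n}}$ (Proposition~\ref{Per}), so $\{\lambda^{(2^{2i}+2^{i})(q-1)}:\lambda\in\mathbb{F}_{2^{n}}\}=\{\lambda^{(2^{i}+1)(q-1)}:\lambda\in\mathbb{F}_{2^{n}}\}$, contradicting the hypothesis on $c$. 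Hence $\gamma\neq0$, the equation reads $X^{2^{2i}}+X^{2^{i}}=c''$ with $c''=d'/\gamma\in\mathbb{F}_{2^{n/2}}$, and putting $Y=X^{2^{i}}$ (a permutation of $\mathbb{F}_{2^{n/2}}$ by Proposition~\ref{Per}, as $\gcd(2^{i},2^{n/2}-1)=1$) it becomes $Y^{2^{i}}+Y+c''=0$, which by Lemme~\ref{gold}(ii) — valid since $\gcd(i,n/2)=1$ — has at most two solutions in $\mathbb{F}_{2^{n/2}}$. By Théorème~\ref{first}, $F$ is APN. I expect the derivative computation of the third paragraph to be the most error-prone step (one must track carefully which quantities lie in $\mathbb{F}_{2^{n/2}}$ and which in $\mathbb{F}_{2^{n}}$), whereas the genuine conceptual point is recognising that the two hypotheses on $c$ and on $b$ are precisely what rules out, respectively, the degeneracy $\gamma=0$ (which would give $0$ or $2^{n/2}$ solutions) and the non-invertibility of $L$.
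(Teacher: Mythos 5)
Votre preuve est correcte et suit pour l'essentiel la m�me d�marche que celle du papier : m�me d�composition $F=L(B,G)$ avec $G(X)=\trace\left(\frac{X^{2^{2i}+2^{i}}}{c^{2^{n-1}}}\right)$ (vous incorporez simplement la normalisation par $c^{2^{n-1}}$ dans $L$ au lieu de diviser d'abord $F$ par $c^{2^{n-1}}$), m�me usage de la proposition~\ref{clef} et de $cb^{q}+b\neq 0$ pour rendre $L$ bijectif, et m�me contradiction $c\in \mathbb{F}_{2^{n}}^{(q-1)(2^{i}+1)}$ lorsque le coefficient $\trace\left(\frac{a^{2^{2i}+2^{i}}}{c^{2^{n-1}}}\right)$ s'annule. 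Rien � signaler.
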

      \begin{proof}
      Nous allons commencer par quelques remarques simples:
\renewcommand{\labelenumi}{{\normalfont (\roman{enumi})}}
\begin{enumerate}
\item $c \notin \mathbb{F}_{2^{n/2}}$\\
En effet: Si $c\in \mathbb{F}_{2^{n/2}}$  alors $c^{q+1}=c^{2}=1 \Rightarrow c=1 \Rightarrow c \in \left\{\lambda ^{(2^{i}+1)(q-1)} , \lambda \in \mathbb{F}_{2^{n}} \right\}$, contradiction.
\item $\frac{b}{c^{2^{(n-1)}}} \notin \mathbb{F}_{2^{n/2}}$\\
  En effet, sinon : $\left(\frac{b}{c^{2^{(n-1)}}}\right)^{q}=\frac{b}{c^{2^{(n-1)}}}$ or $\left(\frac{1}{c^{2^{(n-1)}}}\right)^{q}=\frac{c}{c^{2^{(n-1)}}}$ (cf. proposition~\ref{clef})\\
  $\Leftrightarrow b^{q} \frac{c}{c^{2^{(n-1)}}}=\frac{b}{c^{2^{(n-1)}}} \Leftrightarrow cb^{q}+b=0$, contradiction.\\
  et donc $(1,\frac{b}{c^{2^{(n-1)}}})$ forme une base de $\mathbb{F}_{2^{n}}$ sur $\mathbb{F}_{2^{n/2}}$.
  \item $F$ est APN si et seulement si $\frac{F}{c^{2^{(n-1)}}}$ est APN ( evident ).
\end{enumerate}
Sans perte de généralité nous pouvons identifier $F$ à $\frac{F}{c^{2^{(n-1)}}}$ et donc \\
 $F= \frac{X^{2^{2i}+2^{i}}}{c^{2^{(n-1)}}}+\frac{b}{c^{2^{(n-1)}}} X^{q+1}+\frac{c}{c^{2^{(n-1)}}}X^{q(2^{2i}+2^{i})}=\trace\left( \frac{X^{2^{2i}+2^{i}}}{c^{2^{(n-1)}}}\right)+\frac{b}{c^{2^{(n-1)}}}X^{q+1}$\\
 
 Posons $L(u,v)=u+v\frac{b}{c^{2^{(n-1)}}}$ c'est bien un isomorphisme, et $ G(X)=\trace\left( \frac{X^{2^{2i}+2^{i}}}{c^{2^{(n-1)}}}\right).$
 
   Montrons que $G$ vérifie l'équation~(\ref{important}).
   \begin{align*}
   G(aX+b')+G(aX+a+b')&=\trace\left( \frac{a^{2^{2i}+2^{i}}}{c^{2^{(n-1)}}}(X^{2^{2i}}+X^{2^{i}}+1)\right)\\
                    &=(X^{2^{2i}}+X^{2^{i}}+1)\trace\left( \frac{a^{2^{2i}+2^{i}}}{c^{2^{(n-1)}}}\right)\\
                    &=(X^{2^{i}}+X+1)^{2^{i}} \trace\left( \frac{a^{2^{2i}+2^{i}}}{c^{2^{(n-1)}}}\right)\\
                    &=d
    \end{align*}
    Or $x \rightarrow x^{2^{i}}$ est une permutation sur $\mathbb{F}_{2^{n/2}}$, d'après le lemme~\ref{gold}, et le théorème~\ref{first}, $F$ est APN si et seulement si $ \forall a \in \mathbb{F^{*}}_{2^{n}} \quad      \trace\left( \frac{a^{2^{2i}+2^{i}}}{c^{2^{(n-1)}}}\right)=0 $ n'a pas de solution.\\Supposons que c'est le cas, ceci équivaut à 
   \begin{align*}
   \frac{a^{2^{2i}+2^{i}}}{c^{2^{(n-1)}}} &=\frac{a^{q(2^{2i}+2^{i})}}{c^{q 2^{(n-1)}}}\\
                                          &=\frac{c}{c^{2^{(n-1)}}} a^{q(2^{2i}+2^{i})}  \quad \textrm{( cf. proposition~\ref{clef} ) }
   \end{align*}
 
 Ceci implique $ c=a^{-(q-1)(2^{2i}+2^{i})}=a^{-(q-1)2^{i}(2^{i}+1)} \in \mathbb{F}_{2^{n}}^{(q-1)2^{i}(2^{i}+1)} $\\
 Comme $ x\rightarrow x^{2^{i}}$ est une permutation,
 ce qui entraîne  $c \in \mathbb{F}_{2^{n}}^{(q-1)(2^{i}+1)} $ contradiction, ce qui achève la preuve.
      \end{proof}
  \noindent  \textbf{Remarque:}
  \begin{enumerate}
  \item Les lemmes que j'ai donné \ref{parité}, \ref{solut} et surtout \ref{ssi}, sont très puissantes, ils trouveront leur application dans ce qui va suivre, mais peuvent être appliqués en dehors de l'article.
\item 
 L'étude faite à la sous-section~\ref{per}, montre que $\;\mathbb{F}_{2^{n}}^{(q-1)(2^{i}+1)}=\mathbb{F}_{2^{n}}^{\gcd((q-1)(2^{i}+1),2^{n}-1)}=\mathbb{F}_{2^{n}}^{(q-1)\gcd(2^{i}+1,q+1)}$\\
On utilisant le lemme~\ref{ssi}, on a:
\begin{equation*}
  \mathbb{F}_{2^{n}}^{(q-1)(2^{i}+1)} =
   \begin{cases}
      \mathbb{F}_{2^{n}}^{(q-1)}, &\text{ssi $i$ pair}\\
      \mathbb{F}_{2^{n}}^{(q-1)}, &\text{ssi $i$ impair et $n/2$ pair}\\
      \mathbb{F}_{2^{n}}^{3(q-1)}, &\text{ssi $i$ impair et $n/2$ impair}
   \end{cases}
   \end{equation*}
   \item Le corollaire~\ref{faux}, n'est pas tout à fait correcte, car ils y'a des cas où les hypotheses ne seront jamais satisfaites, et ça c'est très important quand on implémente, de chercher la où on peut trouver. En effet le lemme~\ref{solut}, supprime les deux cas : $i$ pair et $i$ impair avec $n/2$ pair, nous allons donné une version corrigée est optimale de ce résultat.
  \end{enumerate} 
  
  \begin{corol}[version optimale]
  
  Soient $q=2^{n/2}$, $i$ et $n/2$ impairs, vérifiant $\gcd(i,n/2)=1$. Alors:\\
  La fonction 
    $F(X)=X^{2^{2i}+2^{i}}+bX^{q+1}+cX^{q(2^{2i}+2^{i})}$ où   $ \;c,b \in \mathbb{F}_{2^{n}},\textrm{ tel que : }\;
     c^{q+1}=1,\, c \notin \mathbb{F}_{2^{n}}^{*3(q-1)}, \textrm{et }\; c b^{q}+b\neq 0$  est  \textbf{APN}
      \end{corol}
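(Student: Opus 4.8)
The plan is to obtain this statement as a direct specialisation of Corollaire~\ref{faux}: under the extra hypotheses ``$i$ and $n/2$ impairs'', the exclusion condition on $c$ appearing there collapses to the cleaner condition $c\notin\mathbb{F}_{2^{n}}^{*3(q-1)}$ stated here, while the remaining two hypotheses are unchanged; no new argument touching the APN property is needed. First I would translate the hypothesis $c^{q+1}=1$: in characteristic $2$ it is the same as $c^{q+1}+1=0$, so by Lemme~\ref{solut} it is equivalent to $c\in\mathbb{F}_{2^{n}}^{*(q-1)}$. In particular $c\neq 0$, so in the hypothesis of Corollaire~\ref{faux} the set $\{\lambda^{(2^{i}+1)(q-1)}\colon\lambda\in\mathbb{F}_{2^{n}}\}$ may be replaced by $\mathbb{F}_{2^{n}}^{*(2^{i}+1)(q-1)}$ without loss.

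The key step is the identification of the two exclusion conditions. Using that $x\mapsto x^{k}$ on $\mathbb{F}_{2^{n}}^{*}$ has the same image as $x\mapsto x^{\gcd(k,2^{n}-1)}$ (the analysis of \S\ref{per} applied with base field $\mathbb{F}_{2^{n}}$) together with $2^{n}-1=(q-1)(q+1)$, one gets
$$\mathbb{F}_{2^{n}}^{*(q-1)(2^{i}+1)}=\mathbb{F}_{2^{n}}^{*\gcd\bigl((q-1)(2^{i}+1),\,2^{n}-1\bigr)}=\mathbb{F}_{2^{n}}^{*(q-1)\gcd(2^{i}+1,\,q+1)}.$$
Since $\gcd(i,n/2)=1$ with $i$ and $n/2$ both odd, Lemme~\ref{ssi} gives $\gcd(2^{i}+1,q+1)=3$, hence $\mathbb{F}_{2^{n}}^{*(q-1)(2^{i}+1)}=\mathbb{F}_{2^{n}}^{*3(q-1)}$. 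Thus the three hypotheses $c^{q+1}=1$, $c\notin\mathbb{F}_{2^{n}}^{*3(q-1)}$, $cb^{q}+b\neq 0$ of the present corollary are exactly the three hypotheses of Corollaire~\ref{faux}, and the latter yields that $F$ is APN.

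Finally, to justify the word ``optimale'' and that the statement is non-vacuous, I would add two remarks. If $i$ is even, or $i$ is odd with $n/2$ even, Lemme~\ref{ssi} gives $\gcd(2^{i}+1,q+1)=1$, so $\mathbb{F}_{2^{n}}^{*(q-1)(2^{i}+1)}=\mathbb{F}_{2^{n}}^{*(q-1)}$; but $c^{q+1}=1$ forces $c\in\mathbb{F}_{2^{n}}^{*(q-1)}$, so the hypotheses of Corollaire~\ref{faux} are incompatible in those cases, which is exactly why the present version is restricted to $i$, $n/2$ odd. In that case $3\mid q+1$ (again from Lemme~\ref{ssi}), so by Proposition~\ref{Per} the group $\mathbb{F}_{2^{n}}^{*3(q-1)}$ has order $(q+1)/3$ and is of index $3$ in $\mathbb{F}_{2^{n}}^{*(q-1)}$, which has $q+1$ elements; hence $2(q+1)/3$ values of $c$ are admissible, and for each such $c$ the equation $cb^{q}+b=0$ has $b=0$ and exactly $q-1$ nonzero solutions, so at least $2^{n}-2^{n/2}$ values of $b$ remain admissible. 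There is no genuine obstacle: the APN conclusion is inherited verbatim from Corollaire~\ref{faux}, and the only point requiring care is the bookkeeping with the various gcd's and subgroup indices above.
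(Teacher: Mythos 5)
Your proposal is correct and follows essentially the same route as the paper: the paper obtains this \og version optimale\fg{} from Corollaire~\ref{faux} combined with the remark that follows it, namely Lemme~\ref{solut} to rule out the cases ($i$ pair) and ($i$ impair, $n/2$ pair), and the computation $\mathbb{F}_{2^{n}}^{*(q-1)(2^{i}+1)}=\mathbb{F}_{2^{n}}^{*(q-1)\gcd(2^{i}+1,q+1)}=\mathbb{F}_{2^{n}}^{*3(q-1)}$ via la sous-section~\ref{per} et le lemme~\ref{ssi}, which is exactly your key step. Your additional counting of admissible $(c,b)$ is a harmless bonus not present in the paper.
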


\begin{corol}
Soient $q=2^{n/2}$, $s$ et $n/2$ impairs, vérifiant $\gcd(s,n/2)=1$,
 $b\in \mathbb{F}_{2^{n}}$ non cube, et $c\in \mathbb{F}_{2^{n}}\setminus \mathbb{F}_{2^{n/2}} $, $r_{i} \in \mathbb{F}_{2^{n/2}}$. Alors la fonction $F$ définit par:\\
   $F(X)=b X^{2^{s}+1}+b^{q} X^{q(2^{s}+1)}+c X^{q+1}+\sum_{i=1}^{n/2-1} r_{i} X^{2^{i}(q+1)}$ est APN.

\end{corol}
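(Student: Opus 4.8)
The plan is to recast $F$ in the form handled by Theorem~\ref{first} and then reduce the APN condition to the behaviour of a Gold equation. First, observe that $bX^{2^{s}+1}+b^{q}X^{q(2^{s}+1)}=\trace(bX^{2^{s}+1})$, and that $X^{q+1}\in\mathbb{F}_{2^{n/2}}$ for every $X\in\mathbb{F}_{2^{n}}$, so $\sum_{i=1}^{n/2-1}r_{i}X^{2^{i}(q+1)}=\ell(X^{q+1})$ where $\ell(t)=\sum_{i=1}^{n/2-1}r_{i}t^{2^{i}}$ is $\mathbb{F}_{2}$-linear on $\mathbb{F}_{2^{n/2}}$. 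Hence, putting $B(X)=X^{q+1}$, $G(X)=\trace(bX^{2^{s}+1})+\ell(X^{q+1})$ (an $(n,n/2)$-function) and $L(u,v)=cu+v$, one gets $F(X)=L(B(X),G(X))$; here $L$ is a linear isomorphism precisely because $c\notin\mathbb{F}_{2^{n/2}}$, so that $(1,c)$ is an $\mathbb{F}_{2^{n/2}}$-basis of $\mathbb{F}_{2^{n}}$. By Theorem~\ref{first}, $F$ is APN if and only if, for all $a\in\mathbb{F}_{2^{n}}^{*}$, $b'\in\mathbb{F}_{2^{n}}$, $d\in\mathbb{F}_{2^{n/2}}$, the equation $G(aX+b')+G(aX+a+b')=d$ has at most $0$ or $2$ solutions $X\in\mathbb{F}_{2^{n/2}}$.

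Next I would compute this difference for $X\in\mathbb{F}_{2^{n/2}}$. For the Gold part, Lemma~\ref{gold}(i) with $(k,j)=(s,0)$ gives $(aX+b')^{2^{s}+1}+(aX+a+b')^{2^{s}+1}=a^{2^{s}+1}(X^{2^{s}}+X+1)+(b')^{2^{s}}a+b'a^{2^{s}}$; multiplying by $b$, applying $\trace$, and using its $\mathbb{F}_{2^{n/2}}$-linearity together with $X^{2^{s}}+X+1\in\mathbb{F}_{2^{n/2}}$, the $X$-dependent part of the Gold contribution is $(X^{2^{s}}+X)\,\trace(ba^{2^{s}+1})$. For the $\ell$ part, expanding $(aX+b')^{q+1}$ with $X^{q}=X$ shows that $(aX+b')^{q+1}$ and $(aX+a+b')^{q+1}$ differ by $e:=a^{q+1}+a^{q}b'+a(b')^{q}\in\mathbb{F}_{2^{n/2}}$, hence $\ell((aX+b')^{q+1})+\ell((aX+a+b')^{q+1})=\ell(e)$ is independent of $X$. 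Absorbing all $X$-independent terms into $d$, as licensed by the remark preceding Corollary~\ref{faux}, $F$ is APN if and only if for every $a\in\mathbb{F}_{2^{n}}^{*}$ and every $d\in\mathbb{F}_{2^{n/2}}$ the equation $(X^{2^{s}}+X)\,\trace(ba^{2^{s}+1})=d$ has at most two solutions in $\mathbb{F}_{2^{n/2}}$.

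Put $\mu(a)=\trace(ba^{2^{s}+1})$. If $\mu(a)=0$ for some $a\neq 0$, then with $d=0$ every element of $\mathbb{F}_{2^{n/2}}$ is a solution, so $F$ is not APN; if $\mu(a)\neq 0$, dividing turns the equation into $X^{2^{s}}+X+c'=0$ with $c'\in\mathbb{F}_{2^{n/2}}$, which by Lemma~\ref{gold}(ii) has at most $0$ or $2$ solutions since $\gcd(s,n/2)=1$. Therefore $F$ is APN if and only if $\trace(ba^{2^{s}+1})\neq 0$ for every $a\in\mathbb{F}_{2^{n}}^{*}$; and since $\trace(z)=0$ is equivalent to $z\in\mathbb{F}_{2^{n/2}}$ for $z\in\mathbb{F}_{2^{n}}$, this says $ba^{2^{s}+1}\notin\mathbb{F}_{2^{n/2}}$ for all $a\in\mathbb{F}_{2^{n}}^{*}$.

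It remains to identify the exceptional $b$. The condition forces $b\neq 0$, and for $b\neq 0$ it fails exactly when $b\in\mathbb{F}_{2^{n/2}}^{*}\cdot\mathbb{F}_{2^{n}}^{*(2^{s}+1)}$ (since $a\mapsto a^{-1}$ permutes $\mathbb{F}_{2^{n}}^{*}$). By Lemma~\ref{groupe}, in the cyclic group $\mathbb{F}_{2^{n}}^{*}=\langle\alpha\rangle$ this product of subgroups equals $\langle\alpha^{g}\rangle$ with $g=\gcd(2^{n/2}+1,\gcd(2^{s}+1,2^{n}-1))=\gcd(2^{n/2}+1,2^{s}+1)$, the last equality because $2^{n/2}+1\mid 2^{n}-1$; and Lemma~\ref{ssi} gives $g=3$ since $s$ and $n/2$ are odd with $\gcd(s,n/2)=1$. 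Thus the exceptional set is $\langle\alpha^{3}\rangle\cup\{0\}$, that is, the set of cubes of $\mathbb{F}_{2^{n}}$, so $F$ is APN exactly when $b$ is not a cube — which is the hypothesis. The main obstacle is the bookkeeping of the second paragraph, namely checking that the shift $b'$ and the extra terms $r_{i}X^{2^{i}(q+1)}$ contribute only $X$-independent constants to $G(aX+b')+G(aX+a+b')$; once that is settled, the rest is the Gold dichotomy of Lemma~\ref{gold}(ii) together with the group computation via Lemma~\ref{ssi}.
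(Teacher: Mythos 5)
Your proof is correct, and its backbone is the same as the paper's: reduce to Theorem~\ref{first}, invoke the Gold dichotomy of Lemma~\ref{gold}(ii) when $\trace\left(ba^{2^{s}+1}\right)\neq 0$, and rule out the vanishing of $\trace\left(ba^{2^{s}+1}\right)$ using the hypothesis that $b$ is not a cube. You diverge on two points. First, the decomposition: the paper takes $L(u,v)=cu+\sum_{i}r_{i}u^{2^{i}}+v$, so that all the terms $r_{i}X^{2^{i}(q+1)}$ are absorbed into the linear isomorphism and $G(X)=\trace\left(bX^{2^{s}+1}\right)$ is a pure Gold trace; you keep $L(u,v)=cu+v$ and put $\ell(X^{q+1})$ into $G$, which obliges you to verify (as you correctly do) that this part of the directional derivative is constant in $X$. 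The paper's choice avoids that verification, at the price of having to see that the larger $L$ is still injective (it is, since $\sum_{i}r_{i}u^{2^{i}}\in\mathbb{F}_{2^{n/2}}$ while $cu\notin\mathbb{F}_{2^{n/2}}$ for $u\neq 0$). Second, the cube argument: the paper observes that $\mathbb{F}_{2^{n}}^{*(2^{s}+1)}=\mathbb{F}_{2^{n}}^{*3}$ and that every element of $\mathbb{F}_{2^{n/2}}^{*}$ is a cube of $\mathbb{F}_{2^{n}}$ (its order $2^{n/2}-1$ is prime to $3$ since $n/2$ is odd), so $ba^{2^{s}+1}\in\mathbb{F}_{2^{n/2}}$ would force $b$ to be a cube; you instead compute the product of the subgroups $\mathbb{F}_{2^{n/2}}^{*}$ and $\mathbb{F}_{2^{n}}^{*(2^{s}+1)}$ inside the cyclic group $\mathbb{F}_{2^{n}}^{*}$ and identify its index as $\gcd(2^{n/2}+1,2^{s}+1)=3$ via Lemma~\ref{ssi}. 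The two computations are equivalent; yours has the small bonus of exhibiting the exceptional set exactly, so you in fact establish the equivalence ($F$ est APN si et seulement si $b$ n'est pas un cube), a bit more than the statement asks.
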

\begin{proof}

On a l'isomorphisme suivant $L(u,v)=cu+\sum_{i=1}^{n/2-1} r_{i} u^{2^{i}}+v \; (\textrm{cf :c $\in \mathbb{F}_{2^{n}}\setminus \mathbb{F}_{2^{n/2}}$}) $\\
et donc $G(X)= b X^{2^{s}+1}+b^{q} X^{q(2^{s}+1)}=\trace\left(bX^{2^{s}+1}\right)$\\
Montrons que $G$ vérifie l'équation~(\ref{important})

\begin{align*}
   G(aX+b')+G(aX+a+b')&=\trace\left(b a^{2^{s}+1}(X^{2^{s}}+X+1) \right)\\
                    &=(X^{2^{s}}+X+1)\trace\left(b a^{2^{s}+1} \right)\\
                    &=d
    \end{align*}
  D'après le lemme~\ref{gold}, et le théorème~\ref{first}, $F$ est APN si et seulement si  $ \forall a \in \mathbb{F^{*}}_{2^{n}} \quad \trace\left(b a^{2^{s}+1} \right)=0$ n'a pas de solutions.
  C'est le cas, sinon : $\;b a^{2^{s}+1}\in \mathbb{F}_{2^{n/2}}$ comme $\mathbb{F}_{2^{n}}^{(2^{s}+1)}= \mathbb{F}_{2^{n}}^{\gcd(2^{s}+1,2^{n}-1)}=\mathbb{F}_{2^{n}}^{3}$
  
  et que les élément de $\mathbb{F}_{2^{n/2}}$ sont tous des cubes, ceci conduit à  $b$ est un cube, contradiction.
  Donc $F$ est APN.
\end{proof}
  
  \begin{corol} \label{gener}
Soient $\gcd(i,n/2)=1 \; ,\; c\in \mathbb{F}_{2^{n}} \;,\; s \in \mathbb{F}_{2^{n}}\setminus \mathbb{F}_{2^{n/2}} \;,\; q=2^{n/2}$\\
  $F(X)=X(X^{2^{i}}+X^{q}+c X^{2^{i}q})+X^{2^{i}}(c^{q}X^{q}+s X^{q2^{i}})+X^{(2^{i}+1)q}$.\\
  où : $X^{2^{i}+1}+c X^{2^{i}}+c^{q}X+1$ est irréductible sur $\mathbb{F}_{2^{n}}$. Alors $F$ est APN.
  
\end{corol}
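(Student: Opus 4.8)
The plan is to put $F$ into the shape $L(B(X),G(X))$ demanded by Theorem~\ref{first}, with $B(X)=X^{q+1}$, and then to use Lemma~\ref{gold} to reduce the APN property to one non-vanishing condition which the irreducibility hypothesis on $P$ will settle. First I would expand the three parenthesised groups of $F$, obtaining
\[
F(X)=X^{q+1}+sX^{2^i(q+1)}+\bigl(X^{2^i+1}+X^{q(2^i+1)}\bigr)+\bigl(cX^{2^iq+1}+c^{q}X^{2^i+q}\bigr).
\]
Using $X^{q^{2}}=X$ one recognises $X^{2^i(q+1)}=\bigl(X^{q+1}\bigr)^{2^i}$, $X^{2^i+1}+X^{q(2^i+1)}=\trace(X^{2^i+1})$ and $cX^{2^iq+1}+c^{q}X^{2^i+q}=\trace(cX^{2^iq+1})$, so that $F=L\bigl(B(X),G(X)\bigr)$ with $B(X)=X^{q+1}$, $G(X)=\trace\bigl(X^{2^i+1}+cX^{2^iq+1}\bigr)$ and $L(v,u)=v+sv^{2^i}+u$. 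Since $s\notin\mathbb{F}_{2^{n/2}}$, the pair $\{1,s\}$ is an $\mathbb{F}_{2^{n/2}}$-basis of $\mathbb{F}_{2^{n}}$, and as $v\mapsto v^{2^i}$ is a bijection of $\mathbb{F}_{2^{n/2}}$ one checks at once that $L$ is an $\mathbb{F}_{2}$-linear isomorphism $\mathbb{F}_{2^{n/2}}\times\mathbb{F}_{2^{n/2}}\to\mathbb{F}_{2^{n}}$. Hence Theorem~\ref{first} applies: $F$ is APN iff for all $a\in\mathbb{F}_{2^{n}}^{*}$, $b'\in\mathbb{F}_{2^{n}}$, $d\in\mathbb{F}_{2^{n/2}}$ the equation $G(aX+b')+G(aX+b'+a)=d$ has at most two solutions in $\mathbb{F}_{2^{n/2}}$.

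Next I would compute that difference. Expanding $(Y+a)^{2^i+1}$ and $(Y+a)^{2^iq+1}$, substituting $Y=aX+b'$, using $X^{q}=X$ (valid since $X\in\mathbb{F}_{2^{n/2}}$) together with the $\mathbb{F}_{2^{n/2}}$-linearity of $\trace$, and dropping, as the convention adopted above allows, the terms not involving $X$, I expect to arrive at
\[
G(aX+b')+G(aX+b'+a)=\nu\,(X^{2^i}+X)+\textrm{const},\qquad \nu:=\trace\bigl(a^{2^i+1}+c\,a^{2^iq+1}\bigr)\in\mathbb{F}_{2^{n/2}}.
\]
When $\nu\neq0$ this reads $X^{2^i}+X=\textrm{const}$, which by Lemma~\ref{gold}(ii) (with $r=n/2$ and $\gcd(i,n/2)=1$) has at most two solutions in $\mathbb{F}_{2^{n/2}}$. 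So the problem reduces to showing that $\nu\neq0$ for every $a\in\mathbb{F}_{2^{n}}^{*}$.

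For that, note that $\nu=0$ iff $z:=a^{2^i+1}+c\,a^{2^iq+1}$ is fixed by $x\mapsto x^{q}$, i.e.\ iff $z\in\mathbb{F}_{2^{n/2}}$. Writing out $z^{q}$ (again using $q^{2}\equiv1$ on exponents) and dividing the identity $z=z^{q}$ by $a^{2^i+1}$, the condition $\nu=0$ becomes, after setting $t:=a^{q-1}$,
\[
t^{2^i+1}+c\,t^{2^i}+c^{q}\,t+1=0,
\]
that is $P(t)=0$. As $a$ runs over $\mathbb{F}_{2^{n}}^{*}$, $t=a^{q-1}$ runs over $\mathbb{F}_{2^{n}}^{*(q-1)}$, which is a subset of $\mathbb{F}_{2^{n}}$ (indeed, by Lemma~\ref{solut}, it equals $\{t\in\mathbb{F}_{2^{n}}^{*}:t^{q+1}=1\}$). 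Thus $\nu(a)=0$ for some $a$ would force $P$ to have a root in $\mathbb{F}_{2^{n}}$; but $P$ is irreducible over $\mathbb{F}_{2^{n}}$ of degree $2^i+1\ge 3$, hence has no root there -- a contradiction. Therefore $\nu\neq0$ throughout, and $F$ is APN.

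The routine part is the calculation in the second paragraph; the one point needing care there is confirming that the coefficients of $X$ and of $X^{2^i}$ coincide (both equal $\nu$), which is exactly what produces the Gold-type trinomial $X^{2^i}+X+\textrm{const}$. The decisive step, and the one I would flag as the crux, is the substitution $t=a^{q-1}$: it converts the trace-vanishing condition precisely into $P(t)=0$ restricted to the norm-one subgroup $\mathbb{F}_{2^{n}}^{*(q-1)}$, so that the irreducibility of $P$ over $\mathbb{F}_{2^{n}}$ -- in fact far more than is needed, since only the absence of roots on that subgroup is used -- closes the argument.
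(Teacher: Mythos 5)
Your proposal is correct and follows essentially the same route as the paper's own proof: the same decomposition $F=L(B,G)$ with $L(u,v)=u+su^{2^i}+v$ and $G(X)=\trace\bigl(X^{2^i+1}+cX^{2^iq+1}\bigr)$, the same reduction via Lemma~\ref{gold} to the non-vanishing of $\trace\bigl(a^{2^i+1}+ca^{2^iq+1}\bigr)$, and the same substitution $t=a^{q-1}$ turning that condition into $P(t)=0$. Your added observations (the explicit check that $L$ is an isomorphism, and the remark that only the absence of roots of $P$ on the subgroup $\mathbb{F}_{2^{n}}^{*(q-1)}$ is actually needed) are correct refinements of the paper's argument.
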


\begin{proof}
  \begin{align*}
     F(X)&=X^{2^{i}+1}+X^{q+1}+c X^{2^{i}q+1}+c^{q}X^{q+2^{i}}+s X^{2^{i}(q+1)}+X^{(2^{i}+1)q}.\\
         &=X^{q+1}+s X^{2^{i}(q+1)}+X^{2^{i}+1}+c X^{2^{i}q+1}+c^{q} X^{q+2^{i}}+X^{(2^{i}+1)q}\\
         &=L(B(X),G(X))
 \end{align*}
 où : $L(u,v)= u+su^{2^{i}}+v$ est un isomorphisme; $G(X)=\trace\left(X^{2^{i}+1}+cX^{2^{i}q+1}\right)$
 Montrons que $G$ vérifie l'équation~\eqref{important}
 \begin{align*}
   G(aX+b')+G(aX+a+b')&=\trace\left(a^{2^{i}+1}(X^{2^{i}}+X+1)+ca^{2^{i}q+1}(X^{2^{i}q}+X+1)\right)  \quad (\textrm{ comme $\,X\in \mathbb{F}_{2^{n/2}}$})\\
                      &=(X^{2^{i}}+X+1) \trace\left(a^{2^{i}+1}+ca^{2^{i}q+1}\right)\\
                      &=d.
   \end{align*}
   Supposons que $\trace\left(a^{2^{i}+1}+ca^{2^{i}q+1}\right)=0$ a une solution dans $\mathbb{F}_{2^{n/2}}$, ceci équivaut a : 
       
  $a^{2^{i}+1}+ca^{2^{i}q+1}=a^{q(2^{i}+1)}+c^{q}a^{2^{i}+q}$, en divisant par $a^{2^{i}+1}$, et en notant :\\$P(X)=X^{2^{i}+1}+c X^{2^{i}}+c^{q}X+1$, ceci donne $P(a^{q-1})=0$, contradiction.
\end{proof}
On voit que le théorème~\ref{first}, permet de construire une multitude de fonction APN, moi même j'en donne deux, assez générales.\\
posons $G(X)=\trace\left(X^{2^{i}+1}+cX^{2^{i}q+1}+tX^{2^{i}+q}\right)$ et trouvons les conditions nécéssaire et suffisante pour que ca conduit a une fonction APN.
\begin{align*}
   G(aX+b')+G(aX+a+b')&=\trace\left(a^{2^{i}+1}(X^{2^{i}}+X+1)+ca^{2^{i}q+1}(X^{2^{i}q}+X+1)+ta^{q+2^{i}}(X^{2^{i}}+X+1)\right)  \quad (\textrm{ comme $\,X\in \mathbb{F}_{2^{n/2}}$})\\
                      &=(X^{2^{i}}+X+1) \trace\left(a^{2^{i}+1}+ca^{2^{i}q+1}+ta^{q+2^{i}}\right)\\
                      &=d.
   \end{align*}
  Supposons que $\trace\left(a^{2^{i}+1}+ca^{2^{i}q+1}+ta^{q+2^{i}}\right)=0$ a une solution dans $\mathbb{F}_{2^{n}}$. Ceci équivaut a : \\
  $P(a^{q-1})=0$ où $P(X)=X^{2^{i}+1}+(t^{q}+c)X^{2^{i}}+(c^{q}+t)X+1$, il suffit de choisir $P$ irréductible sur $\mathbb{F}_{2^{n}}$
  \begin{corol}
  Soient $q=2^{n/2}$, $i$ tel que $\gcd(i,n/2)=1$, $B(X)=X^{q+1}$.\\$G(X)=\trace\left(X^{2^{i}+1}+cX^{2^{i}q+1}+tX^{2^{i}+q}\right)$, et $L : \mathbb{F}_{2^{n/2}}\times \mathbb{F}_{2^{n/2}}\rightarrow \mathbb{F}_{2^{n}}$ un isomorphisme quelconque.\\
  Alors   $F=L(B,G)$  est \textbf{APN} si et seulement si $P(X)=X^{2^{i}+1}+(t^{q}+c)X^{2^{i}}+(c^{q}+t)X+1$ n'a pas de racines dans $\mathbb{F}_{2^{n}}$.
  \end{corol}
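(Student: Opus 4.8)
The strategy is the one already used for the preceding corollaries: reduce the APN property of $F=L(B,G)$ through Theorem~\ref{first}, compute the second-order derivative of $G$ along an arbitrary affine line explicitly, and turn the resulting non-vanishing condition into a statement about the roots of $P$.

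First I would invoke Theorem~\ref{first}: $F$ is APN if and only if, for all $a\in\mathbb{F}_{2^{n}}^{*}$, $b'\in\mathbb{F}_{2^{n}}$ and $d\in\mathbb{F}_{2^{n/2}}$, the equation $G(aX+b')+G(aX+a+b')=d$ has at most $0$ or $2$ solutions $X\in\mathbb{F}_{2^{n/2}}$. Writing $G=\trace\bigl(X^{2^{i}+1}+cX^{2^{i}q+1}+tX^{2^{i}+q}\bigr)$, I apply Lemma~\ref{gold}(i) to each of the three monomials of the exponent as a polynomial identity in $X$, then restrict $X$ to $\mathbb{F}_{2^{n/2}}$, where $X^{q}=X$, so that $X^{2^{i}q}$ becomes $X^{2^{i}}$ and $X^{q}$ becomes $X$ and all three contributions acquire the common factor $X^{2^{i}}+X+1$. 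Using $\mathbb{F}_{2^{n/2}}$-linearity of $\trace$ this gives, exactly as in the computation preceding the statement,
\[
G(aX+b')+G(aX+a+b')=(X^{2^{i}}+X+1)\,\lambda(a)+\delta(a,b'),\qquad \lambda(a):=\trace\!\bigl(a^{2^{i}+1}+ca^{2^{i}q+1}+ta^{q+2^{i}}\bigr),
\]
with $\delta(a,b')\in\mathbb{F}_{2^{n/2}}$ a constant absorbed into $d$. If $\lambda(a)\neq0$ the equation reduces to $X^{2^{i}}+X+\bigl(1+d/\lambda(a)\bigr)=0$, which has at most two solutions in $\mathbb{F}_{2^{n/2}}$ by Lemma~\ref{gold}(ii) (here $\gcd(i,n/2)=1$ is used); whereas if $\lambda(a)=0$ for some $a\neq0$ the left-hand side is constant, so a suitable $d$ makes the whole of $\mathbb{F}_{2^{n/2}}$ the solution set (and $\#\mathbb{F}_{2^{n/2}}>2$). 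Hence $F$ is APN if and only if $\lambda(a)\neq0$ for every $a\in\mathbb{F}_{2^{n}}^{*}$.

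Next I would translate $\lambda(a)=0$. Since $\trace(w)=w+w^{q}$ this is $w^{q}=w$ with $w=a^{2^{i}+1}+ca^{2^{i}q+1}+ta^{q+2^{i}}$; expanding $w^{q}$, using $a^{q^{2}}=a$, grouping terms, dividing by $a^{2^{i}+1}$ (legitimate as $a\neq0$) and substituting $Y=a^{q-1}$ yields exactly $P(Y)=0$, with $P(X)=X^{2^{i}+1}+(t^{q}+c)X^{2^{i}}+(c^{q}+t)X+1$ (note $(t^{q}+c)^{q}=c^{q}+t$, so $P$ has the shape $X^{2^{i}+1}+\alpha X^{2^{i}}+\alpha^{q}X+1$ with $\alpha=t^{q}+c\in\mathbb{F}_{2^{n}}$). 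As $a$ runs over $\mathbb{F}_{2^{n}}^{*}$, the element $Y=a^{q-1}$ runs over $\mathbb{F}_{2^{n}}^{*(q-1)}$, which by Lemma~\ref{solut} (equivalently, by the image count of Proposition~\ref{Per}) is precisely the set of the $q+1$ solutions of $Y^{q+1}=1$. Therefore $F$ is APN if and only if $P$ has no root $Y$ with $Y^{q+1}=1$; in particular, since every such $Y$ lies in $\mathbb{F}_{2^{n}}$, the hypothesis ``$P$ has no root in $\mathbb{F}_{2^{n}}$'' already forces $F$ to be APN, and this is the implication actually used to produce APN functions.

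It remains to prove the converse: if $P$ has any root in $\mathbb{F}_{2^{n}}$ then it has a root with $Y^{q+1}=1$. Here I would use the reciprocal--conjugate identity $X^{2^{i}+1}P(1/X)=P^{(q)}(X)$, where $P^{(q)}$ is $P$ with $x\mapsto x^{q}$ applied to its coefficients; it follows from $\alpha^{q^{2}}=\alpha$ by comparing coefficients. Raising $P(y)=0$ to the power $q$ gives $P^{(q)}(y^{q})=0$, hence $(y^{q})^{2^{i}+1}P(y^{-q})=0$, and since $y\neq0$ (because $P(0)=1$) we get $P(y^{-q})=0$; thus $\sigma\colon y\mapsto y^{-q}$ permutes the roots of $P$, it is an involution on the $\mathbb{F}_{2^{n}}$-rational ones (there $y^{q^{2}}=y$), and its fixed points are exactly the roots with $y^{q+1}=1$. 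Moreover the factor $(X+y)(X+y^{-q})$ attached to a rational root $y$ has constant term $y^{1-q}$, which again satisfies $(y^{1-q})^{q+1}=1$, and the product of all roots of $P$ equals $1$. From here the plan is: if some rational root is $\sigma$-fixed we are done; otherwise peel off the $\sigma$-pairs one at a time, check that the cofactor keeps a reciprocal--conjugate form up to a constant that is again a $(q+1)$-th root of unity and has odd degree, and conclude from the product-of-roots relation together with the sparsity of $P$ (only four nonzero monomials, with the $X$-coefficient the $q$-th power of the $X^{2^{i}}$-coefficient). I expect this last step — ruling out a cofactor of odd degree $\geq 3$ irreducible over $\mathbb{F}_{2^{n}}$ yet consistent with the four-term shape of $P$ — to be the main obstacle; everything before it is routine computation in $\mathbb{F}_{2^{n}}$. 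If this point resists, the clean fully rigorous statement is ``$F$ APN $\iff$ $P$ has no root $Y$ with $Y^{q+1}=1$'', which suffices for the applications above.
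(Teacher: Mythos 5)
Your computation is exactly the paper's: this corollaire is stated as an immediate consequence of the displayed derivation that precedes it in the text (namely $G(aX+b')+G(aX+a+b')=(X^{2^{i}}+X+1)\,\trace\left(a^{2^{i}+1}+ca^{2^{i}q+1}+ta^{q+2^{i}}\right)$ up to a constant, followed by the observation that $\trace\left(a^{2^{i}+1}+ca^{2^{i}q+1}+ta^{q+2^{i}}\right)=0$ is equivalent to $P(a^{q-1})=0$). Up to and including the reduction ``$F$ APN $\iff$ $P(a^{q-1})\neq 0$ pour tout $a\in\mathbb{F}_{2^{n}}^{*}$'', your argument and the paper's coincide, and your algebra (including the identity $(t^{q}+c)^{q}=c^{q}+t$ and the use of Lemma~\ref{gold}) is correct.

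Where you genuinely add something is in observing that what this argument actually proves is ``$F$ APN $\iff$ $P$ n'a pas de racine $Y$ avec $Y^{q+1}=1$'', since $a\mapsto a^{q-1}$ maps $\mathbb{F}_{2^{n}}^{*}$ onto the $(q+1)$-st roots of unity (Lemma~\ref{solut}), whereas the corollary is phrased with ``pas de racines dans $\mathbb{F}_{2^{n}}$''. That phrasing gives sufficiency for free (every $Y$ with $Y^{q+1}=1$ lies in $\mathbb{F}_{2^{n}}$), but it turns the ``only if'' direction into a nontrivial extra claim: a root of $P$ in $\mathbb{F}_{2^{n}}$ must force a root on the unit circle. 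The paper never addresses this; its text stops at ``il suffit de choisir $P$ irr\'eductible'', so the gap you flag is in the statement of the corollary, not in your reasoning. Your symmetry $\sigma\colon y\mapsto y^{-q}$ is the right tool and does close the case $i=1$: a non-fixed rational root $y$ pairs with $y^{-q}$, and since the product of the three roots is $1$ the remaining root is $y^{q-1}$, which satisfies $(y^{q-1})^{q+1}=1$. For general $i$ you honestly leave the peeling step open; I would either complete it or, better, restate the corollary with the unit-circle condition $Y^{q+1}=1$, which is the version that is both actually proved here and optimal.
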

  
\begin{corol}
Soient $n/2$ impair, et $i,j$ vérifiant $(j-i)$ impairs et $\gcd(j-i,n/2)=1$, $c$ élément de $ \mathbb{F}_{2^{n}}^{*(q-1)} \setminus \mathbb{F}_{2^{n}}^{*3(q-1)}$, $B(X)=X^{q+1}$.
$G(X)=\trace\left(\frac{X^{2^{j}+2^{i}}}{c^{2^{n-1}}}\right)$, et $L : \mathbb{F}_{2^{n/2}}\times \mathbb{F}_{2^{n/2}}\rightarrow \mathbb{F}_{2^{n}}$ un isomorphisme quelconque. Alors\\
$F=L(B,G)$  est \textbf{APN}.
\end{corol}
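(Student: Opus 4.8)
Le plan est d'appliquer le th\'eor\`eme~\ref{first}: puisque $F=L(B,G)$ avec $B(X)=X^{q+1}$, la fonction $F$ est APN si et seulement si, pour tout $a\in\mathbb{F}_{2^{n}}^{*}$, tout $b'\in\mathbb{F}_{2^{n}}$ et tout $d\in\mathbb{F}_{2^{n/2}}$, l'\'equation $G(aX+b')+G(aX+a+b')=d$ a au plus $0$ ou $2$ solutions $X\in\mathbb{F}_{2^{n/2}}$. Tout se ram\`ene donc \`a l'analyse de cette \'equation.

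D'abord je calculerais son membre de gauche. On peut supposer $j>i$ (ils sont distincts puisque $j-i$ est impair). En appliquant le lemme~\ref{gold}(i) avec l'exposant $2^{j}+2^{i}$ \`a $aX+b'$, en divisant par $c^{2^{n-1}}$, puis en utilisant la $\mathbb{F}_{2^{n/2}}$-lin\'earit\'e de $\trace$ et le fait que $X^{2^{j}}+X^{2^{i}}+1\in\mathbb{F}_{2^{n/2}}$ d\`es que $X\in\mathbb{F}_{2^{n/2}}$ (en n\'egligeant, selon la convention \'enonc\'ee avant le corollaire~\ref{faux}, les termes de $\mathbb{F}_{2^{n/2}}$ ind\'ependants de $X$), on obtient
\[
G(aX+b')+G(aX+a+b')=\bigl(X^{2^{j}}+X^{2^{i}}+1\bigr)\,\trace\!\left(\frac{a^{2^{j}+2^{i}}}{c^{2^{n-1}}}\right).
\]
Posons $T(a):=\trace\bigl(a^{2^{j}+2^{i}}/c^{2^{n-1}}\bigr)$; l'\'equation devient $\bigl(X^{2^{j}}+X^{2^{i}}+1\bigr)T(a)=d$. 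Si $T(a)=0$ pour un $a\neq0$, alors pour $d=0$ tout $X\in\mathbb{F}_{2^{n/2}}$ serait solution, ce qui en fait beaucoup trop; il faut donc \'etablir que $T(a)\neq0$ pour tout $a\neq0$. En l'admettant, l'\'equation s'\'ecrit $X^{2^{j}}+X^{2^{i}}=e$ pour un $e\in\mathbb{F}_{2^{n/2}}$, et en posant $Y=X^{2^{i}}$ (bijection de $\mathbb{F}_{2^{n/2}}$, car $X\mapsto X^{2}$ l'est) elle devient $Y^{2^{j-i}}+Y+e=0$; comme $\gcd(j-i,n/2)=1$, le lemme~\ref{gold}(ii) sur $\mathbb{F}_{2^{n/2}}$ donne au plus $2$ solutions en $Y$, donc au plus $2$ en $X$, ce qui suffirait \`a conclure via le th\'eor\`eme~\ref{first}.

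Reste le point principal, la non-annulation de $T$. Comme $c\in\mathbb{F}_{2^{n}}^{*(q-1)}$, le lemme~\ref{solut} donne $c^{q+1}=1$, de sorte que la proposition~\ref{clef} s'applique et fournit $\trace(\omega/c^{2^{n-1}})=(\omega+c\omega^{q})/c^{2^{n-1}}$ pour tout $\omega\in\mathbb{F}_{2^{n}}$. Ainsi $T(a)=0$ \'equivaut, apr\`es multiplication par $c^{2^{n-1}}$ et en caract\'eristique $2$, \`a $a^{2^{j}+2^{i}}=c\,a^{q(2^{j}+2^{i})}$, c'est-\`a-dire \`a $c=a^{-(q-1)(2^{j}+2^{i})}$, donc $c\in\mathbb{F}_{2^{n}}^{*(q-1)(2^{j}+2^{i})}$. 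D'apr\`es la discussion suivant la proposition~\ref{Per}, cet ensemble de puissances vaut $\mathbb{F}_{2^{n}}^{*\gcd((q-1)(2^{j}+2^{i}),2^{n}-1)}$; je calculerais alors ce pgcd: comme $2^{j}+2^{i}=2^{i}(2^{j-i}+1)$ avec $\gcd(2^{i},2^{n}-1)=1$, et $2^{n}-1=(q-1)(q+1)$ avec $\gcd(q-1,q+1)=1$, il vient $\gcd\bigl((q-1)(2^{j}+2^{i}),2^{n}-1\bigr)=(q-1)\gcd(2^{j-i}+1,q+1)$, et puisque $j-i$ et $n/2$ sont impairs avec $\gcd(j-i,n/2)=1$, le lemme~\ref{ssi} donne $\gcd(2^{j-i}+1,2^{n/2}+1)=3$. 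On obtient $c\in\mathbb{F}_{2^{n}}^{*3(q-1)}$, ce qui contredit l'hypoth\`ese $c\notin\mathbb{F}_{2^{n}}^{*3(q-1)}$: d'o\`u $T(a)\neq0$ pour tout $a\neq0$, et $F$ est APN. J'attends que l'obstacle principal soit exactement cette r\'eduction de pgcd (g\'erer soigneusement les exposants modulo $n$ dans la proposition~\ref{clef}, et appliquer le lemme~\ref{ssi} avec $j-i$ dans le r\^ole de l'exposant); le reste est un calcul de routine, parall\`ele \`a celui du corollaire~\ref{faux}.
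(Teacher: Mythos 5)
Votre preuve est correcte et suit essentiellement la m\^eme d\'emarche que celle du papier : calcul de la d\'eriv\'ee de $G$ via le lemme~\ref{gold}(i), r\'eduction au fait que $\trace\bigl(a^{2^{j}+2^{i}}/c^{2^{n-1}}\bigr)$ ne s'annule pas, puis contradiction avec $c\notin\mathbb{F}_{2^{n}}^{*3(q-1)}$ via le pgcd et le lemme~\ref{ssi}. Vous explicitez simplement ce que le papier exp\'edie par \og les m\^emes arguments \fg{} (la v\'erification de $c^{q+1}=1$ par le lemme~\ref{solut} avant d'invoquer la proposition~\ref{clef}, la substitution $Y=X^{2^{i}}$ et le d\'ecompte des solutions), ce qui est bienvenu mais ne change pas l'argument.
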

 \begin{proof}
 Vérifiant que $G$ satisfait l'équation~\eqref{important}.
 \begin{align*}
   G(aX+b')+G(aX+a+b')&=\trace\left(\frac{a^{2^{j}+2^{i}}}{c^{2^{n-1}}}(X^{2^{j}}+X^{2^{i}}+1)\right)\\
                      &=(X^{2^{j-i}}+X+1)^{2^{i}}\trace\left(\frac{a^{2^{j}+2^{i}}}{c^{2^{n-1}}}\right)\\
                      &=d
  \end{align*}
  Les mêmes arguments utiliser jusqu'ici, montre que $F$ est APN ssi $\trace\left(\frac{a^{2^{j}+2^{i}}}{c^{2^{n-1}}}\right)=0$ n'a pas de solutions pour tout $a$ dans ${\mathbb{F}^{*}}_{2^{n}}$. Supposons que c'est le cas alors $c \in \mathbb{F}_{2^{n}}^{*(q-1)(2^{j}+2^{i})}$ soit d'apres le lemme~\ref{ssi}, $c\in\mathbb{F}_{2^{n}}^{*3(q-1)}$ contradiction, donc $F$ est APN.
 
 \end{proof}

\end{document}